\newtheorem{definition}{Definition}
\newtheorem{lemma}{Lemma}
\newtheorem*{lemma*}{Lemma}
\newtheorem{claim}{Claim}
\newtheorem{theorem}{Theorem}
\newtheorem{obs}{Observation}
\newcommand{\comment}[1]{}
\renewcommand{\phi}{\varphi}
\renewcommand{\models}{\vDash}
\title{Revisiting Epistemic Logic with Names\thanks{The research of Marta B\'ilkov\'a was supported by RVO: 67985807. The research of Zo\'e Christoff and Olivier Roy was partly supported by the DFG-GACR project ``SEGA" (RO 4548/6-1). 
This publication is also part of Zo\'e Christoff's project ``Democracy on Social Networks'' (VI.Veni.201F.032) of the research programme VENI financed by the Dutch Research Council (NWO).}
}
\author{
Marta B\'ilkov\'a \institute{Czech Acad Sci, Inst Comp Sci}\email{bilkova@cs.cas.cz}
\and Zo\'e Christoff  \institute{University of Groningen}\email{z.l.christoff@rug.nl}
\and Olivier Roy \institute{University of Bayreuth}\email{olivier.roy@uni-bayreuth.de}
}
\date{}
\begin{document}
\maketitle

\begin{abstract}
This paper revisits the multi-agent epistemic logic presented in~\cite{HG93}, where agents and sets of agents are replaced by abstract, intensional ``names''. We make three contributions. First, we study its model theory, providing adequate notions of bisimulation and frame morphisms, and use them to study the logic's expressive power and definability. Second, we show that the logic has a natural neighborhood semantics, which in turn allows to show that the axiomatization in~\cite{HG93} does not rely on possibly controversial introspective properties of knowledge. Finally, we extend the logic with common and distributed knowledge operators, and provide a sound and complete axiomatization for each of these extensions. These results together put the original epistemic logic with names in a more modern context and opens the door for a logical analysis of epistemic phenomena where group membership is uncertain or variable.
\end{abstract}

In~\cite{HG93}, Grove and Halpern studied a generalized version of multi-agents epistemic logic where the usual labels for agents and sets of agents are replaced by abstract names whose extension might vary from state to state.\footnote{The idea of replacing standard labels with abstract names appeared earlier, for instance in~\cite{fagin1987belief} and~\cite{moses1988programming}. The former define a notion of belief as a "society of minds" along the lines of the operator $S_n$ defined below. The latter define the notion of "everyone in a group $n$" following the same semantic idea as for the operator written $E_n$ here, and use it to define a notion of implicit common knowledge like the one later studied in~\cite{RAK}. We briefly come back to this notion in Section~\ref{sec:common}. \cite{fagin1987belief} proposes an axiomatization of belief as society of minds notions, but not together with the $E_n$ modality, as~\cite{HG93} do. \cite{moses1988programming} do not axiomatize the notion of common knowledge they put forward.} Despite being interpreted in standard multi-agents epistemic models, the resulting language does away with the familiar $K_i$ modalities, and instead contains two families of epistemic operators: $S_n$, standing for ``someone with name $n$ knows'', and $E_n$, standing for ``everyone with name $n$ knows''. 

This generalization is conceptually important. The ``names'' that index the $S_n$ and the $E_n$ modalities can refer intensionally to both individuals and groups. Since these extensions are not fixed in a given model, the logic allows to study social-epistemic phenomena that involve uncertainty or variability in the agents' identities or group membership. These phenomena are pervasive. \cite{moses1988programming,HG93} already provide convincing examples for distributed systems. Massive coordinated actions or social movements, especially online, also provide contemporary cases~\cite{bennett2014organization,coleman2014hacker}, where for instance we refer to group labels like ``Trump supporters'' or ``trolls'' without knowing exactly who the members of these groups are or even failing to know whether we, ourselves, are members of those groups.\footnote{Some of these phenomena have been studied using tools from multi-agent epistemic logic, c.f.~\cite{roy2019shared,dunin2011teamwork}.} The study in~\cite{HG93}, however, focuses on the two modalities mentioned above, and in particular leaves aside notions of common and distributed knowledge. These notions are, however, central to theories of social conventions~\cite{lewis2008convention,bicchieri2005grammar} and collective action~\cite{sep-collective-intentionality}.  \cite{moses1988programming} study a closely related notion of common knowledge, to which we come back briefly in Section~\ref{sec:common}, but do not provide an axiomatization. Distributed knowledge for intensional or indexical group names has been studied in~\cite{naumov2018everyone}, but in a more expressive language with explicit quantification. 

Epistemic logic with names is also technically interesting. Even though the idea appears earlier~\cite{von1954essay,marcus1961modalities,hintikka1962knowledge}, Grove and Halpern's contribution, as well as~\cite{grove1995naming}, has been seminal for the development of so-called term modal logic---c.f.~\cite{liberman2020dynamic} and the references therein---which in turn have helped to understand, among others, \emph{de dicto} and \emph{de re} knowledge attributions. Since many term modal logics turn out to be undecidable, one important question in that literature has been to identify decidable fragments---c.f.~\cite{shtakser2018propositional} and the references therein for the case of epistemic logic. The basic system in~\cite{HG93} is one of them. That paper, however, does not address questions of definability, invariance, expressive power, or proof theory.~\cite{padmanabha2019propositional} makes headway in that direction for a very closely related, but still non-equivalent language.  

Epistemic logic with names thus stands at the crossroad of important conceptual questions regarding group knowledge and group agency, on the one hand, and technical questions in the landscape of extended modal languages, on the other. As we show below, this generalization of epistemic logic can also be studied from the perspective of neighborhood semantics~\cite{pacuit2017neighborhood}, bringing a third tradition to bear on the understanding of this system. Although some of these areas have had contacts with each other, many dots still need to be connected, which is  what this paper sets itself to do.

After introducing the basics of epistemic logic with names, we start by formulating adequate notions of bisimulations and frame morphisms for this logic. Based on the observation that these notions are structurally similar to the corresponding notions for neighborhood frames~\cite{hansen2003monotonic,hansen2007bisimulation}, we show that the basic system can indeed be given a natural interpretation in neighborhood semantics. This allows to import a number of results regarding definability and expressive power from non-normal modal logic, as well as to show that this basic system is actually not dependent on assuming that the agent's epistemic state is represented by partitions/equivalence relations. We finally turn to group notions, showing sound and complete axiomatizations of common and distributed knowledge with names.

\section{Epistemic Logic with Names}\label{sec:background}
Epistemic logic with names replaces the familiar individual epistemic modalities $K_i$, standing for ``agent $i$ knows...'', with modalities $E_n$ for ``everybody with name $n$ knows'' and $S_n$ for ``somebody with name $n$ knows''. 
\begin{definition}[syntax]\label{def:syntax} 
The language $\mathcal{L_N}$ is defined as follows:
$$\varphi\,\, := ~p~|~\neg \varphi~|~\varphi \wedge \varphi~|~E_n\varphi~|~S_n\varphi$$
where $p\in\Phi$ with $\Phi$ a countable set of atomic propositions; $n\in N\subseteq\mathbb{N}$ is a name.
\end{definition}
 The basic idea is that these ``names'' can refer both to individuals and to groups, or even not refer to anyone at all, and that these references are intensional. Who is named by $n$ might change from state to state. In what follows we often refer to the agent(s) named by $n$ at the particular state as the "group $n$" in that state. Beside this modification, however, in~\cite{HG93} this language is interpreted in standard epistemic, i.e. "partitional" or "S5" models.
 
\begin{definition}[frames and models]\label{def:GH-frames}
Let $N$ be a given set of names. A \emph{frame over $N$} is a tuple $\mathcal{F} = (W,A,\mathcal{R},\mu)$ where:\begin{itemize}
    \item $W,A$ are (nonempty) sets of \emph{states} and \emph{agents}, respectively.
    \item $\mu:W \times N \to \mathcal{P}(A)$ is a \emph{naming function} that assigns to each world and name the set of agents that have that name in this world.
    \item $\mathcal{R}: A \to W\times W$ assigns to each agent a reflexive binary relation on $W$ such that $wR_a w'$ only if $a\in \mu(w,n)$ for some $n$. When each $R_a$ is an equivalence relation we call $F$ an \emph{epistemic} frame. We often write $R_a(w)$ for $\{v\ |\ wR_a v\}$.
\end{itemize}
An (epistemic) model over $N$ and $\Phi$ is an (epistemic) frame over $N$ together with a valuation function $\pi$ for a given set of atomic sentences $\Phi$.
\end{definition}
This definition is slightly different from the one in~\cite{HG93}. First they use an existence function $\alpha: W \to \mathcal{P}(A)$, which tells at each state which agents exist, and require that only existing agents have a name. We instead implicitly define existence using the naming function. To exist is to have a name, or be a group member. It can easily be shown that truth in the language above is invariant under this modification. In~\cite{HG93}, each $R_a$ is furthermore only allowed to connect the current state to ones where $a$ exists, which we do not assume here.

Truth and validity for this language are defined as expected, revealing the implicit quantification over agents behind the $S_n$ and the $E_n$ modalities. The clauses for the atomic formulas and Boolean connectives are standard. 
\begin{definition}[satisfaction]\label{def:satisfaction} 
Let $M$ be a model and $w\in W$: 
\begin{align*}
 M, w\vDash E_n\phi &\ \mbox{ iff\ \ for all\ \ \ \ \  }   a\in\mu(w,n):\ \forall w'\in R_a(w), \ M,  w'\vDash\phi\\
 M, w\vDash S_n\phi &\ \mbox{ iff\ \ for some }  a\in\mu(w,n): \ \forall w'\in R_a(w), \ M,  w'\vDash\phi
\end{align*}
\end{definition}
\cite{HG93} work with epistemic frames, but the logical behavior of $E_n$ and $S_n$ is much weaker than the usual $S5$ individual knowledge operators. Neither satisfy positive nor negative introspection, $E_n$ does not satisfy $T$ and $S_n$ is not normal. The following example illustrates this:

\vspace{-1,7cm}

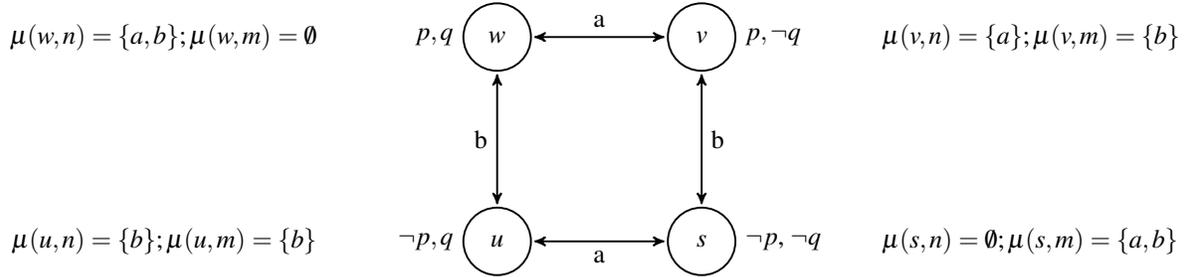
\begin{figure}[ht]
    \centering
    \resizebox{1\textwidth}{!}{
    \begin{tikzpicture}[node/.style={circle, draw=black,minimum size=10mm, thick},->,>=stealth',shorten >=1pt,shorten <=1pt,
    auto, 
    node distance=2cm,
    ]
    \tikzstyle{every state}=[draw=none,text=black]
\node[node,label=left:{$p,q$}](w){$w$};
\node[node,label=right:{$p,\neg q$}](v)[right=of w]{$v$};
\node[node,label=right:{$\neg p$, $\neg q$}](s)[below=of v]{$s$};
\node[node, label=left:{$\neg p,q$}](u)[below=of w]{$u$};
\draw[thick, <->, label] (w) -- (v) node [midway,above]{a};
\draw[thick,<->] (v) -- (s)node [midway,right]{b};
\draw[thick,<->] (s) -- (u)node [midway,below]{a};
\draw[thick,<->] (u) -- (w)node [midway,left]{b};
\node[state](muw)[left=of w]{$\mu(w,n)=\{a,b\};\mu(w,m)=\emptyset$};
\node[state](muv)[right=of v]{$\mu(v,n)=\{a\};\mu(v,m)=\{b\}$};
\node[state](mus)[right=of s]{$\mu(s,n)=\emptyset;\mu(s,m)=\{a,b\}$};
\node[state](muu)[left=of u]{$\mu(u,n)=\{b\};\mu(u,m)=\{b\}$};
    \end{tikzpicture}
    }
    \vspace{-2cm}
    \caption{An epistemic model with names. The labeled arrows represent agents $a$ and $b$’s respective indistinguishably relations (we omit reflexive arrows).}
    \label{fig:example}
\end{figure}

In the model depicted in Figure \ref{fig:example}, agent $a$ knows whether $p$, say ``Trump was impeached'', but does not know whether $q$, say ``Trump lost the election'', and vice versa for agent $b$. Consider state $w$, where both $a$ and $b$ are labeled by $n$, say ``Trump supporters'', while neither of them is labeled $m$, say ``trolls''.   Agent $a$ knows that she herself is a Trump supporter and that $b$ is either a fellow Trump supporter or a troll. In $w$, some but not all Trump supporters know that Trump was impeached, $w\models S_np \wedge \neg E_np$. And since there is no troll in $w$, no troll knows that Trump was impeached, while at the same time, trivially, all trolls know that he was impeached, in as much as they know that he was not, $w\models \neg S_mp \wedge E_m p \wedge E_m\neg p$.  In state $u$, some troll knows that Trump lost the election but it is not the case that some troll knows that some troll knows that Trump lost, $u\models S_mq\wedge\neg S_mS_mq$.  And in state $s$, no Trump supporter knows that Trump was impeached yet no supporter knows that no supporter knows it, $s\models \neg S_np\wedge \neg S_n\neg S_np$. 

The comparatively weak logical behavior of the two modalities is reflected at the axiomatic level. The system below is indeed shown in~\cite{HG93} to be sound and complete with respect to the special class of \emph{epistemic} frames, even though the axioms $4$ and $5$ are absent for both $S_n$ and $E_n$.  
\begin{definition}[axiom system $AX_\mathcal{N}$]\label{def:axiom}
The system $AX_\mathcal{N}$ comprises the following axioms and rules:
\begin{align*}
   & PL & \mbox{All instances of propositional tautologies} \\
  & MP & \mbox{From } \varphi \mbox{ and } \varphi \rightarrow \psi, \mbox{infer } \psi \\
   & T(S_n) & S_n\varphi \rightarrow \varphi \\
   &  K(E_n) & E_n\varphi \land E_n(\varphi \rightarrow \psi) \rightarrow E_n\psi \\
 & Nec(E_n) & \mbox{From } \varphi, \mbox{ infer } E_n\varphi \\
    & Int_1 & S_n\varphi \land E_n(\varphi \rightarrow \psi)\rightarrow S_n\psi \\
   & Int_2 & \lnot E_n\bot \rightarrow S_n\top
\end{align*}
\end{definition}
\cite{HG93} study further extensions of this system, to cover for instance cases where no two agents have the same name or every agent has its own proper name. They study in greater detail the case where the agents know their own names/which groups they belong to, which, interestingly, shed light on the source of introspection for standard epistemic modalities. Since in this paper we focus on the general system as defined above, we leave the discussion of these special cases for future work.

\section{Morphisms and bisimulations}\label{sec:modeltheory}

To allow for further model-theoretic considerations, we start with the definition of an adequate notion of frame morphisms and bisimulations for epistemic logic with names.
\begin{definition}\label{def:GH-morphisms}
A frame morphism from a frame $F = (W,A,\mathcal{R},\mu)$ to a frame $F'= (W',A',\mathcal{R}',\mu')$ is a map $f: W \to W'$ satisfying the following conditions:
\begin{itemize}
    \item[] (there)\ \ $\forall a\in\mu(w,n)\ \exists a'\in\mu'(f(w),n)\ R'_{a'}(f(w)) = f[R_a(w)]$
    \item[] (back)\ \ \ $\forall a'\in\mu'(f(w),n)\ \exists a\in\mu(w,n) \ R'_{a'}(f(w)) = f[R_a(w)]$
\end{itemize}
\end{definition}
In both items, $R'_{a'}(f(w)) = f[R_a(w)]$ can be equivalently split into the two usual there-and-back conditions: $w R_a v$ implies $f(w)R'_{a'}f(v)$, and $f(w)R'_{a'} w'$ implies $\exists v\in W (w R_a v \wedge f(v)=w')$. Frame validity is, as expected, preserved under frame morphisms. If each world and its image satisfy the same atomic propositions, we obtain an invariance result for the language of epistemic logic with names. $F,w \vDash \phi$ is defined as $\forall\pi\ F,w,\pi \vDash \phi$.
\begin{lemma}\label{lem:GH-morphisms-preserve-flas}
Assume that $f: F \to F'$ is a frame morphism, and valuations $\pi,\pi'$ are given so that $\pi(w) = \pi'(f(w))$ for each $w\in W$. Then for each formula $\phi \in\mathcal{L_{N}}$,
$$
F,\pi,w \vDash \phi \ \mbox{ if and only if }\ F',\pi',f(w)\vDash \phi.
$$
From this it follows that $F,w \vDash \phi \ \mbox{ implies }\ F',f(w)\vDash \phi.$
\end{lemma}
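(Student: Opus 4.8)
The plan is to prove the biconditional by induction on the structure of $\phi$, and then derive the frame-validity consequence as a corollary. Throughout I assume $f$ is a frame morphism satisfying (there) and (back), and that $\pi(w) = \pi'(f(w))$ for every $w \in W$. The base case, $\phi = p$, is immediate from this agreement of valuations: $F,\pi,w \vDash p$ iff $p \in \pi(w)$ iff $p \in \pi'(f(w))$ iff $F',\pi',f(w) \vDash p$. The Boolean cases for $\neg$ and $\wedge$ are entirely routine, using the induction hypothesis componentwise.

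The heart of the argument lies in the two modal cases, and here I would first record the useful reformulation noted right after Definition~\ref{def:GH-morphisms}: the equation $R'_{a'}(f(w)) = f[R_a(w)]$ is equivalent to the conjunction of a forth condition ($w R_a v$ implies $f(w) R'_{a'} f(v)$) and a back condition ($f(w) R'_{a'} w'$ implies there is $v$ with $w R_a v$ and $f(v) = w'$). The key lemma I would isolate is the following: under the induction hypothesis for $\phi$, for any agent $a \in \mu(w,n)$ and any $a' \in \mu'(f(w),n)$ with $R'_{a'}(f(w)) = f[R_a(w)]$, we have
\begin{equation*}
\bigl(\forall v \in R_a(w):\ F,\pi,v \vDash \phi\bigr) \quad\text{iff}\quad \bigl(\forall w' \in R'_{a'}(f(w)):\ F',\pi',w' \vDash \phi\bigr).
\end{equation*}
This equivalence is what allows the local quantifier over accessible worlds to be transported across $f$. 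Its proof uses the induction hypothesis together with the image equation $R'_{a'}(f(w)) = f[R_a(w)]$: every $w'$ on the right is $f(v)$ for some $v \in R_a(w)$ (by surjectivity of $f$ onto the image), and conversely every $v$ on the left maps to some $w'$ in the right-hand set.

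For the $E_n$ case, $F,\pi,w \vDash E_n\phi$ means that for all $a \in \mu(w,n)$, all $\phi$-successors hold; I would use (there) to match each such $a$ with an $a' \in \mu'(f(w),n)$ realizing the image equation, apply the key lemma, and conclude the universal statement on the $F'$ side, with (back) supplying the converse direction by matching each $a'$ with a suitable $a$. For the $S_n$ case, which is the existential quantifier over agents, the roles of (there) and (back) swap: a witnessing agent $a$ on the $F$ side is transported to a witness $a'$ via (there), and a witness $a'$ on the $F'$ side is pulled back to a witness $a$ via (back). The main obstacle, and the step deserving the most care, is ensuring that (there) and (back) are invoked in the correct direction for each modality and each direction of the biconditional, since the existential operator $S_n$ and the universal operator $E_n$ interact oppositely with the two morphism conditions; getting this bookkeeping right is the crux, whereas the underlying image equation does the real work uniformly in both cases.

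Finally, the frame-validity consequence follows formally: if $F,w \vDash \phi$, then by definition $F,\pi,w \vDash \phi$ for every valuation $\pi$ on $F$. Given any valuation $\pi'$ on $F'$, I define a matching valuation $\pi$ on $F$ by $\pi(w) := \pi'(f(w))$, so that the hypothesis of the biconditional is met; the already-established equivalence then yields $F',\pi',f(w) \vDash \phi$. Since $\pi'$ was arbitrary, $F',f(w) \vDash \phi$, completing the argument.
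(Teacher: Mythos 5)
Your overall strategy is exactly the paper's: induction on formula complexity, with the only non-routine point being which of (there)/(back) serves which modal case and which direction of the biconditional. Your key lemma transporting the local quantifier over $R_a(w)$ versus $R'_{a'}(f(w))$ across the image equation is correct, as are the base case, the Boolean cases, and the derivation of the frame-validity corollary via the pulled-back valuation $\pi(w) := \pi'(f(w))$.

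However, you get the bookkeeping for $E_n$ backwards --- precisely the step you flag as the crux. To prove the forward direction $F,\pi,w \vDash E_n\phi \Rightarrow F',\pi',f(w) \vDash E_n\phi$, you must establish a \emph{universal} claim over all $a' \in \mu'(f(w),n)$; starting from each $a \in \mu(w,n)$ and invoking (there) only produces \emph{some} $a'$ for each $a$, and nothing guarantees that every $a' \in \mu'(f(w),n)$ is reached this way. The correct move is to take an arbitrary $a'$ and use (back) to pull it back to an $a \in \mu(w,n)$ satisfying the image equation, then apply your key lemma. Dually, the converse direction for $E_n$ needs (there), not (back). The correct assignment is: (there) yields forth-$S_n$ and back-$E_n$, while (back) yields back-$S_n$ and forth-$E_n$ --- your $S_n$ paragraph has this right, but your $E_n$ paragraph swaps the two. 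The error is locally fixable since all the ingredients are present, but as written the $E_n$ forward step does not go through.
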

The usual model theoretic constructs can be grasped using frame morphisms: that $F'$ is a generated subframe of $F$ is defined via an injective frame morphism $f:F'\rightarrowtail F$, a morphic image frame via a surjective frame morphism $f: F\twoheadrightarrow F'$, and we can see that inclusions in a disjoint union of frames $f_i: F_i \rightarrowtail \biguplus_{i\in I}F_i$ are frame morphisms. This provides us with the usual validity preservation results. 
Regarding frame definability, we can already show that the language behaves differently from standard modal logics: 
for example, the set of formulas $\{S_n p\rightarrow p \ |\ n\in N\}$ defines the class of frames satisfying the condition
    $
    \forall x \forall n\ (\forall a\in\mu(x,n). xR_a x).
    $
The condition requires reflexive loops to exist on each state $x$ for all relations indexed by the agents named in $x$. The language \emph{cannot}, however, define reflexivity of specific individual relations. The proof is short but illustrative, so we state it explicitly.
\begin{obs}
The class of frames satisfying $\forall x. xR_a x$ for some fixed $a\in A$ is not definable in the language $\mathcal{L_{N}}$. 
\end{obs}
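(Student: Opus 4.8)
The plan is to prove non-definability by exhibiting two frames that the language $\mathcal{L_N}$ cannot distinguish—so they validate exactly the same formulas—yet one satisfies $\forall x.\, xR_a x$ while the other does not. The cleanest route is to use the invariance result of Lemma~\ref{lem:GH-morphisms-preserve-flas}: if I produce a surjective frame morphism $f: F \twoheadrightarrow F'$ where $F'$ lies in the class and $F$ does not, then frame validity transfers from $F'$ to $F$ (a formula valid on $F'$ is valid on $F$, by the morphic-image direction), and running the argument so that the two frames validate the same formulas gives the contradiction. Even more directly, I can build a single morphism between two concrete small frames and appeal to the preservation statement to conclude that no formula can separate them.

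Concretely, first I would fix the distinguished agent $a$ and design the ``bad'' frame $F$ so that some relation $R_a$ fails reflexivity at a state, while arranging that $a$ is never the \emph{unique} agent carrying any name at that state. The key observation driving the construction is that the satisfaction clauses for $E_n$ and $S_n$ quantify over \emph{all} agents named $n$, never over a single fixed agent: the language has no way to ``point at'' $a$ in isolation. So I would take a frame with (at least) two agents $a,b$ that always share every name, i.e. $\mu(x,n) = \{a,b\}$ or $\emptyset$ for every $x,n$, and let $R_b$ be reflexive everywhere while $R_a$ fails reflexivity at some world. Then I would exhibit a companion frame $F'$—for instance obtained by making $a$'s relation coincide with $b$'s, or by collapsing $a$ into $b$—which sits in the reflexive class, together with a frame morphism between them.

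The key steps, in order, are: (1) write down $F$ (bad) and $F'$ (good, in the class) explicitly with their agent sets, naming functions, and relations; (2) define the map $f$ on worlds and verify the (there) and (back) conditions of Definition~\ref{def:GH-morphisms}—here the crucial point is that for each agent named $n$ in $F$ I can find an agent named $n$ in $F'$ whose relation-image matches, and vice versa, which works precisely because $a$ and $b$ are name-indistinguishable; (3) invoke Lemma~\ref{lem:GH-morphisms-preserve-flas} to conclude $F,w \vDash \phi \iff F',f(w) \vDash \phi$ for all $\phi$, so that any putative defining formula would have to hold on both or neither; (4) observe that $F'$ satisfies $\forall x.\, xR_a x$ (or validates whatever the defining set demands) while $F$ does not, contradicting definability.

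The main obstacle I anticipate is the bookkeeping in step (2): verifying the (back) clause requires that \emph{every} agent named $n$ in the image frame be matched by an agent in the source whose $R$-image under $f$ coincides exactly as a set, $R'_{a'}(f(w)) = f[R_a(w)]$. Getting the image sets to match on the nose—rather than merely up to the usual there/back containments—is where the construction has to be tuned carefully, and it is the reason I keep $a$ and $b$ perfectly interchangeable with respect to names. A subtle point to double-check is the frame constraint from Definition~\ref{def:GH-frames} that $wR_a w'$ only if $a \in \mu(w,n)$ for some $n$, so in the non-reflexive frame the agent $a$ must still carry a name wherever its relation acts; this is automatic in my two-agents-share-all-names design, but worth confirming explicitly.
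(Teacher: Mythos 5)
Your proposal takes essentially the same route as the paper: exhibit a frame morphism between two frames that the language cannot separate because morphisms only track \emph{named} agents, not a fixed agent $a$, with one frame satisfying $\forall x.\,xR_ax$ and the other not. The paper's version is more minimal than yours---two single-state, single-agent frames where the morphism simply renames $a$ to $b$---but the mechanism is identical. One caveat: Lemma~\ref{lem:GH-morphisms-preserve-flas} preserves \emph{frame} validity only from the source to the image, so your stated orientation (``$F'$ lies in the class and $F$ does not, then frame validity transfers from $F'$ to $F$'') is backwards; you must either send the frame \emph{in} the class onto the frame outside it, or, as your symmetric two-agent construction in fact allows (and as the paper's example does), have the map be a frame morphism in both directions so that the two frames validate exactly the same formulas.
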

\begin{proof}
Consider frame $F_1$ to consist of a single state $x R_a x$ with $\mu(x,n) = \{a\}$, and frame $F_2$ to consist of a single state $x' R_b x'$ with $\mu'(x',n) = \{b\}$. Now observe that putting $f(x)=x'$ we obtain a frame morphism from a $R_a$-reflexive frame to a $R_a$-non-reflexive frame.
\end{proof}
Similar definability/undefinability examples can be constructed for many well-known properties and modal axioms, but we do not go into the general definability characterization here, as it turns out we can place the frames in the context of monotone neighborhood structures and address the phenomenon there.

When extended with valuations, frame morphisms become morphisms between models. As a simple application of invariance under morphisms between models, one can see that the modality "someone knows that" 
with truth condition 
$
w\vDash S\phi \equiv \exists a\in \alpha(w)\ \forall v\in R_a(w)\ v\vDash\phi
$ 
is not expressible in epistemic logic with names, as morphisms cannot distinguish worlds without agents from worlds without named agents\footnote{The modality $S$ is close to a modality considered in~\cite{padmanabha2019propositional}, only the one used in~\cite{padmanabha2019propositional} quantifies globally over all agents in $A$, not only those that exist in a particular state. It is not expressible in epistemic logic with names either.}.
A straightforward argument shows that extending the language $\mathcal{L_{N}}$ with this modality would allow to differentiate frames with and without an existence function mentioned above. The individual knowledge modalities $K_a$ are not expressible in $\mathcal{L_{N}}$ either, as morphisms cannot distinguish between different agents with the same name. 

Morphisms between models, via their graph, implicitly encompass the notion of bisimulation between models. This latter notion is worthwhile defining explicitly, as this will open the door to a re-interpretation of epistemic logic with names in neighborhood semantics. This definition turns out to be essentially of the same shape as in~\cite{padmanabha2019propositional}, but taking into account the naming function.
\begin{definition}\label{def:bisimulations}
A bisimulation between a model $M = (W,A,\mathcal{R},\mu,\pi)$ and a model $M'= (W',A',\mathcal{R}',\mu',\pi')$ is a binary relation $B\subseteq W\times W'$ satisfying the following conditions: $wBw '$ implies, for each $n$,:
\begin{itemize}
    \item[(0)] $\pi(w)= \pi'(w')$
    \item[(1)]  $\forall a\in \mu(w,n) \ \exists a'\in \mu'(w',n) \ (\forall u'\in R'_{a'}(w') \ \exists u\in R_a (w) \ uBu')\ \wedge\  (\forall u\in R_{a}(w) \ \exists u'\in R'_{a'} (w') \ uBu')$ 
    \item[(2)]  $\forall a'\in \mu'(w',n) \ \exists a\in \mu(w,n) \ (\forall u\in R_{a}(w) \ \exists u'\in R'_{a'} (w') \ uBu')\ \wedge\ 
     (\forall u'\in R'_{a'}(w') \ \exists u\in R_{a} (w) \ uBu')$
\end{itemize}
If $B$ is a bisimulation as above and $wBw'$, we call $(W,A,\mathcal{R},\pi,w)$ and $(W',A',\mathcal{R'},\pi',w')$ \emph{bisimilar}.
\end{definition}
As expected, bisimilarity implies modal equivalence for the language of epistemic logic with names, and the converse holds for image-finite models. These are models where for every state $w\in W$ and $n \in N$, both $\mu(w,n)$ and $R_a(w)$ are finite. 
\begin{lemma}\label{lem:bisimulationpreservesvalidity}
Assume $B$ is a bisimulation between a model $M$ and a model $M'$, and that $wBw'$. Then for each formula $\phi\in \mathcal{L_{N}}$,
$$
M,w \vDash \phi \ \mbox{ if and only if }\ M,w'\vDash \phi.
$$
Furthermore, if both $M$ and $M'$ are image-finite, then modal equivalence implies bisimilarity. 
\end{lemma}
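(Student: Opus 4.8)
The plan is to prove the two implications separately. For the first, \emph{bisimilarity implies modal equivalence}, I would argue by induction on the structure of $\phi\in\mathcal{L_N}$, assuming $wBw'$. The atomic case is exactly clause~(0), and the Boolean cases are immediate. For $\phi = S_n\psi$, suppose $M,w\vDash S_n\psi$, witnessed by some $a\in\mu(w,n)$ with every $u\in R_a(w)$ satisfying $\psi$. Clause~(1) hands me an $a'\in\mu'(w',n)$ such that every $u'\in R'_{a'}(w')$ is $B$-related to some $u\in R_a(w)$; by the induction hypothesis each such $u'$ satisfies $\psi$, so $a'$ witnesses $M',w'\vDash S_n\psi$. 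For $\phi = E_n\psi$, I instead fix an arbitrary $a'\in\mu'(w',n)$ and use clause~(2) to pull it back to some $a\in\mu(w,n)$ covering it, so that every successor of $a'$ is matched by a $\psi$-successor of $a$; since $M,w\vDash E_n\psi$ forces all of these to satisfy $\psi$, so do all successors of $a'$. The converse directions are symmetric, using the back-and-forth halves of~(1) and~(2), so the biconditional follows in one induction.

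For the second implication I would take $B$ to be modal equivalence, writing $w\equiv w'$ when $w$ and $w'$ satisfy the same formulas of $\mathcal{L_N}$, and show that $\equiv$ is a bisimulation on image-finite models. Clause~(0) is immediate since atomic propositions are formulas. The content lies in verifying~(1) and~(2), which, once unwound, assert precisely this: for every name $n$ and every $a\in\mu(w,n)$ there is $a'\in\mu'(w',n)$ such that $R_a(w)$ and $R'_{a'}(w')$ are \emph{Egli--Milner related under $\equiv$}---every world on one side is $\equiv$-equivalent to some world on the other, and vice versa---and symmetrically for agents at $w'$. So the whole problem reduces to producing, for each agent's successor set on one side, a single matching agent on the other whose successor set realises exactly the same $\equiv$-classes.

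Here image-finiteness does the essential work: for fixed $w,w',n$, the sets $\mu(w,n)$, $\mu'(w',n)$ and all the successor sets $R_a(w)$, $R'_{a'}(w')$ are finite, so only finitely many worlds are in play and finitely many $\equiv$-classes occur among them. I would argue by contradiction: assuming some $a\in\mu(w,n)$ admits no Egli--Milner-matching $a'$, I would, for each candidate $a'\in\mu'(w',n)$, select a witness to the failure (a world on one side equivalent to nothing on the other) and, using that non-equivalent pairs are separated by formulas, collect the corresponding separating formulas. Because all the relevant index sets are finite, these can be assembled by finite conjunction and disjunction into a single formula; applying $S_n$ (to capture containment of a successor set in a definable region) and $E_n$ (to capture the behaviour of the union $\bigcup_a R_a(w)$) then yields a formula of $\mathcal{L_N}$ true at $w$ but false at $w'$, contradicting $w\equiv w'$.

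I expect the genuine obstacle to be matching a single agent in \emph{both} Egli--Milner directions at once. The modality $S_n$ is monotone---it only detects that \emph{some} successor set is contained in a given definable set---so forcing the ``back'' inclusion (the chosen $a'$ carries no extraneous $\equiv$-class) is comparatively easy, whereas forcing the ``forth'' inclusion (the same $a'$ still covers every $\equiv$-class occurring in $R_a(w)$) for that very agent is delicate, precisely because, as the Observation shows, the language cannot name individual agents or their relations. The careful point will be to design one formula whose outer $S_n$ pins the containment while inner conjuncts indexed by the finitely many $\equiv$-classes pin the covering, checking that the two requirements can be met by the same witness; this is also exactly the place where viewing each $\mu(\cdot,n)$ as a monotone neighbourhood collection, as flagged earlier, organises the bookkeeping.
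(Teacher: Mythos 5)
Your treatment of the first claim is correct and is essentially the paper's own (omitted) induction: the ``back'' conjunct of clause~(1) handles $S_n$ in the forward direction, the second conjunct of clause~(2) handles $E_n$, and the symmetric halves give the converse. No issue there.

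For the second claim your plan has the right general shape (take $B$ to be modal equivalence and manufacture a distinguishing formula from finitely many separating formulas), and it is the strategy the paper itself follows. But the step you defer --- ``checking that the two requirements can be met by the same witness'' --- is not bookkeeping to be organised later; it is exactly where the argument breaks, and it cannot be completed for the bisimulation of Definition~\ref{def:bisimulations}. The formulas $S_n\phi$ only record, for each definable set, whether \emph{some} successor set $R_a(w)$ is contained in it, and $E_n\phi$ only whether the union $\bigcup_{a\in\mu(w,n)}R_a(w)$ is; together they determine the downward closure (under inclusion of realised $\equiv$-classes) of the family $\{R_a(w)\}_{a\in\mu(w,n)}$ and its union, but not which exact class-sets individual agents realise. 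Concretely, let $M$ have states $w,x_1,x_2$ separated by atoms, with three agents named $n$ at $w$ and $R_{a_1}(w)=\{w,x_1\}$, $R_{a_2}(w)=\{w,x_2\}$, $R_{a_3}(w)=\{w,x_1,x_2\}$, and let $M'$ be identical except that the third agent is absent; make the successor states modally inert apart from their atoms (using a fresh name, realised identically on both sides, to satisfy the frame conditions). Both models are finite, and $w$ and $v$ are modally equivalent: on either side, some successor set lies in a definable $C$ iff $C$ contains the class of $w$ and at least one of the classes of $x_1,x_2$, and the unions coincide. Yet clause~(1) fails for $a_3$, since neither remaining agent's successor set covers both $x_1$ and $x_2$. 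So no formula of the kind you describe exists, and the Hennessy--Milner direction fails for this notion of bisimulation. The paper's own proof has the same defect: it negates clause~(1) as if only the ``back'' conjunct could fail for every candidate $a'$, and its formula $S_n(\bigwedge_l\bigvee_i\phi^{w^i}_{b_l})$ handles only that case; the ``forth'' failure --- the very case your instinct flags as delicate --- is the one that cannot be expressed. A repair must weaken the bisimulation (e.g.\ to a core monotone clause for $S_n$ plus a separate union clause for $E_n$) rather than strengthen the formula.
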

 \begin{proof}
The first part is standard, and proceeds by induction on the complexity of the formula $\phi$. For the second part, the proof follows the argument in ~\cite{padmanabha2019propositional}.
 \end{proof}
Graphs of frame morphisms are prominent examples of bisimulations, as shown in the following Lemma. The proof is omitted for the proceedings version of the paper.
\begin{lemma}\label{lem:graphofmorphism}
Assume that $f: F \to F'$ is a frame morphism, and valuations $\pi,\pi'$ are given so that $\pi(w) = \pi'(f(w))$ for each $w\in W$. Then the graph relation $G(f) = \{(w,f(w))\ |\ w\in W\}$ is a bisimulation between a model $F = (W,A,\mathcal{R},\mu,\pi)$ and a model $F'= (W',A',\mathcal{R}',\mu',\pi')$. Moreover, functional bisimulations are graphs of frame morphisms.
\end{lemma}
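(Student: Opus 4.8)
The plan is to verify both directions by unfolding the definitions and exploiting the fact that, for the graph relation $G(f)$, the two back-and-forth clauses of a bisimulation are jointly equivalent to the single set equality $R'_{a'}(f(w)) = f[R_a(w)]$ that appears in the definition of a frame morphism (cf.\ the splitting remark following Definition~\ref{def:GH-morphisms}).

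First I would show that $G(f)$ is a bisimulation by checking (0), (1), (2) for $w\,G(f)\,w'$, i.e.\ $w' = f(w)$. Clause (0) is immediate from the hypothesis $\pi(w) = \pi'(f(w))$. For (1), fix a name $n$ and $a\in\mu(w,n)$; the clause (there) supplies some $a'\in\mu'(f(w),n) = \mu'(w',n)$ with $R'_{a'}(w') = f[R_a(w)]$. Both zig-zag halves then follow from this equality: if $u\in R_a(w)$ then $f(u)\in f[R_a(w)] = R'_{a'}(w')$ and $u\,G(f)\,f(u)$, giving the forth clause; and if $u'\in R'_{a'}(w') = f[R_a(w)]$ then $u' = f(u)$ for some $u\in R_a(w)$, so $u\,G(f)\,u'$, giving the back clause. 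Clause (2) is obtained identically, this time invoking (back) to match each $a'\in\mu'(w',n)$ with a suitable $a\in\mu(w,n)$.

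For the converse, suppose $B = G(f)$ is a functional bisimulation, where "functional" is taken to mean that $B$ is the graph of a total map $f\colon W\to W'$, so that $u\,B\,u'$ holds exactly when $u' = f(u)$ (totality is what guarantees $f$ is defined on all of $W$, as required of a frame morphism). I would then verify (there) and (back). Fixing $n$ and $a\in\mu(w,n)$ and applying clause (1) at $w\,B\,f(w)$ yields some $a'\in\mu'(f(w),n)$ satisfying the two zig-zag conditions. Reading them through $u\,B\,u' \Leftrightarrow u' = f(u)$, the forth half says every $u\in R_a(w)$ has $f(u)\in R'_{a'}(f(w))$, i.e.\ $f[R_a(w)]\subseteq R'_{a'}(f(w))$, while the back half says every $u'\in R'_{a'}(f(w))$ equals $f(u)$ for some $u\in R_a(w)$, i.e.\ $R'_{a'}(f(w))\subseteq f[R_a(w)]$. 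Together these give $R'_{a'}(f(w)) = f[R_a(w)]$, which is exactly (there); clause (2) yields (back) symmetrically.

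The argument is essentially a bookkeeping translation, so no step is genuinely hard. The one point requiring care is the quantifier alignment: the existentially chosen agent $a'$ (respectively $a$) delivered by a morphism clause is precisely the witness demanded by a bisimulation clause, and conversely. One must also notice that it is exactly the functionality of $B$ that lets the conjunction of the two zig-zag clauses collapse into genuine set equality, since for a merely relational $B$ the clauses would only yield the two inclusions modulo $B$ rather than $R'_{a'}(f(w)) = f[R_a(w)]$.
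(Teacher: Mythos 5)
Your proposal is correct and follows essentially the same route as the paper's proof: both directions unfold the definitions, matching (there)/(back) with bisimulation clauses (1)/(2), and using functionality of $G(f)$ to collapse the two zig-zag conditions into the set equality $R'_{a'}(f(w)) = f[R_a(w)]$. Nothing further is needed.
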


The semantics of the two modalities, the definitions of frame morphism and bisimulation above, and also the condition corresponding to the $T(S_n)$ axiom, bear striking resemblance to the model theory of monotone neighborhood models, if we see the sets $\{R_a(w)\ |\ a\in\mu(w,n)\}$ as so-called core neighborhood sets~\cite{hansen2003monotonic,pacuit2017neighborhood}.
In particular, the notions of core bounded morphism and core monotone bisimulation for so-called core-complete monotone neighborhood models (cf. again~\cite{hansen2003monotonic,pacuit2017neighborhood} for the definition) are relevant here. As we will show in the following section, we can indeed view frames and models in an equivalent way as certain neighborhood structures. This should not come as a surprise, since $S_n$ are monotone non-normal modalities of the $\exists\forall$ type. Under closer scrutiny, frame morphisms are core bounded morphisms in disguise, while our definition of bisimulation is stronger than that of core monotone bisimulation, in that it requires that $B$ is full between $R_a(w)$ and $R'_{a'}(w')$ in both (1) and (2). This is because we have an additional $\forall\forall$ type of modality (namely $E_n$) in the language while the general theory of monotone neighborhood structures only considers $\exists\forall$ modalities. Interestingly, and in contrast to core monotone bisimulations, functional bisimulations in our sense correspond to graphs of frame morphisms (and thus core bounded morphisms). 

\section{Neighborhood semantics}
Beyond the similarity in their underlying notions of model-theoretic invariance, there are also two conceptual reasons to study epistemic logic with names in neighborhood semantics. First, it allows to study the modalities $E_n$ and $S_n$ as collective epistemic attitudes in their own right, attitudes that can be instantiated by a number of different assignments of agents to names or groups. Indeed, even though they are present in the semantic structures, the language of this logic makes no direct references to individuals. Different assignments of agents to a name or group $n$, i.e. different values of $\mu(w, n)$, can yield equivalent sets of statements regarding what some or all agents in $n$ know at $w$. Abstracting from the concrete frames and models defined in the previous sections and moving to neighborhood semantics allows us to study some of the variability in group membership/naming that this allows.

The second conceptual reason to study epistemic logic with names in neighborhood semantics is that it helps assessing the importance of using epistemic, i.e. partitional/S5 models in the semantics, as done in~\cite{HG93}. We already observed that in the general case neither positive nor negative introspection are valid for $E_n$ or $S_n$, even when the underlying individual relations are assumed to be transitive and Euclidean. These assumptions being controversial anyway~\cite{williamson2002knowledge,stalnaker2006logics}, this raises the question of whether the logic $AX_\mathcal{N}$ is sound and complete with respect to a larger class of frames. The completeness result provided in~\cite{padmanabha2019propositional} provides evidence that this is the case, but a precise argument for this was still missing.

\begin{definition}\label{def:nbhd}
A neighborhood frame $\mathfrak{F}$ for a given index set $I$  is a tuple $(W, \{\nu_i\}_{i \in I})$ where $W$ is a set of states and for each $i \in I$, $\nu_i: W \to \mathcal{P}\mathcal{P}(W)$ is a neighborhood function that assigns to each state $w$ a set of sets of states. We call $\nu_i(w)$ the $i$-neighborhood of $w$. Whenever for all $w$ and $X \in \nu_i(w)$, we have that $w \in X$, we say that $\nu_i$ is \emph{reflexive}. 

A neighborhood model is a neighborhood frame together with a valuation function $\pi$. When $I$ is a set $N$ of names we call $\mathfrak{F}$ neighborhood frame for $N$.
\end{definition}
Note that neighborhood frames for $N$ do not explicitly contain agents. To avoid confusion, in this section we will sometimes refer to the frames defined in Definition~\ref{def:GH-frames} as \emph{Kripke} frames, in contrast to the neighborhood frames that we have just defined. The two epistemic modalities $S_n$ and $E_n$ are interpreted using the "inexact" semantic clause, which builds in monotonicity.
\begin{definition}(satisfaction in neighborhood models)
\begin{align*}
        w\vDash S_n\phi \ &\equiv\ \exists X\in \nu_n(w)  \ X\subseteq ||\phi|| \\
        w\vDash E_n\phi \ &\equiv\ \forall X\in \nu_n(w)  \ X\subseteq ||\phi||
    \end{align*}    
\end{definition}
As a first step, we provide a simple representation of reflexive neighborhood frames into frames as defined above. Crucially these frames are not necessarily epistemic ones.
\begin{obs}\label{representation}
Let $N$ be a set of names and $\mathcal{F} = (W,A,\mathcal{R},\mu)$ be a frame for $N$. There is a reflexive neighborhood frame $\mathfrak{F}^\mathcal{F}$ for $N$ such that for all $\pi$ and $\phi\in\mathcal{L_{N}}$,
$ \mathcal{F}, \pi, w \vDash \phi \text{ iff } \mathfrak{F}^{\mathcal{F}}, \pi, w \vDash \phi  $.
Conversely, if $\mathfrak{F}  = (W, \{\nu_i\}_{n \in N})$ is a reflexive neighborhood frame for $N$ then there is a frame $\mathcal{F}^{\mathfrak{F}}$ such that 
$ \mathfrak{F}, \pi, w \vDash \phi \text{ iff } \mathcal{F}^{\mathfrak{F}}, \pi, w \vDash \phi  $.
\end{obs}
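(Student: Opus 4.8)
The plan is to run both directions through one common observation: under the identification of the $n$-neighbourhood $\nu_n(w)$ with the family of \emph{core} sets $\{R_a(w)\mid a\in\mu(w,n)\}$, the neighbourhood clauses of the two modalities unfold verbatim into the Kripke clauses of Definition~\ref{def:satisfaction}. Indeed $w\vDash S_n\phi$ in the neighbourhood sense ($\exists X\in\nu_n(w)$ with $X\subseteq||\phi||$) is literally ``$\exists a\in\mu(w,n)$ with $R_a(w)\subseteq||\phi||$'', and dually for $E_n$ with $\forall$ in place of $\exists$. Since both frames share the same carrier $W$ and valuation $\pi$, the atomic and Boolean cases are trivial, and once the two core families coincide the modal cases are immediate; a routine induction on $\phi$ then closes the equivalence in either direction. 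So the whole proof reduces to building, for each direction, a structure of the other kind whose core families agree, while respecting the relevant reflexivity constraint.

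For the forward direction I would set $\mathfrak{F}^{\mathcal{F}}:=(W,\{\nu_n\}_{n\in N})$ on the same carrier, with $\nu_n(w):=\{R_a(w)\mid a\in\mu(w,n)\}$. Reflexivity transfers for free: each $R_a$ is reflexive, so $w\in R_a(w)$, hence $w\in X$ for every $X\in\nu_n(w)$ and $\nu_n$ is reflexive in the sense of Definition~\ref{def:nbhd}. The common observation then gives $\mathcal{F},\pi,w\vDash\phi$ iff $\mathfrak{F}^{\mathcal{F}},\pi,w\vDash\phi$.

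For the converse I would realise each neighbourhood set by a dedicated agent. Take $A:=\{a_{w,n,X}\mid w\in W,\ n\in N,\ X\in\nu_n(w)\}$, declare $a_{w,n,X}$ named $n$ at its \emph{home} world $w$, and put $R_{a_{w,n,X}}(w):=X$. By construction $\{R_a(w)\mid a\in\mu(w,n)\}=\nu_n(w)$, and reflexivity of $\nu_n$ (i.e. $w\in X$) supplies $w\in R_{a_{w,n,X}}(w)$, so each agent loops at its home world. Applying the common observation gives the desired equivalence, and---this being the conceptual payoff announced before the statement---the frame $\mathcal{F}^{\mathfrak{F}}$ so produced is merely reflexive, \emph{not} epistemic, which is precisely what licenses dropping the S5 assumption.

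The delicate step, and the one I expect to be the main obstacle, is completing this converse into a \emph{legal} Kripke frame. An agent is a global object: it carries one relation $R_a$ over all of $W$ together with a world-dependent name, whereas the neighbourhood data is purely local to each world. The frame conditions force every agent to loop reflexively and to be named at \emph{every} world, not only its home world, since $wR_aw$ together with ``$wR_aw'$ only if $a\in\mu(w,n)$ for some $n$'' makes existence mandatory wherever the loop sits. One must therefore specify each agent's behaviour away from its home world without creating spurious neighbourhoods: the fix is to let $a_{w,n,X}$, at any other world $v$, loop reflexively and be (re)assigned to some name $n'$ for which it realises an \emph{existing} member of $\nu_{n'}(v)$, which is possible exactly when every world carries at least one nonempty neighbourhood. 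The remaining bookkeeping is to check that these auxiliary assignments leave $\{R_a(v)\mid a\in\mu(v,n)\}$ equal to $\nu_n(v)$ for every $v,n$---the home agents $a_{v,n,\cdot}$ already realise all of $\nu_n(v)$, so the extra presences only ever contribute duplicates of sets already present---after which the common observation again closes the argument.
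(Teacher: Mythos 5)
Your forward direction, and the induction that drives both equivalences, are exactly the paper's: put $\nu_n(w)=\{R_a(w)\mid a\in\mu(w,n)\}$ and observe that once the core families agree, the neighbourhood clauses for $S_n$ and $E_n$ unfold verbatim into the Kripke clauses. The divergence is in the converse, and it matters. The paper does not introduce one agent per triple $(w,n,X)$; it takes the neighbourhood sets themselves as the agents, setting $\mu(w,n)=\nu_n(w)$, $A=\bigcup_{n\in N}\bigcup_{w\in W}\mu(w,n)$, and $R_a(w)=a$. Under that choice the identity $\{R_a(w)\mid a\in\mu(w,n)\}=\nu_n(w)$ holds at every world by definition, reflexivity of $R_a$ wherever $a$ is named is inherited from reflexivity of the neighbourhood frame (if $a\in\nu_n(w)$ then $w\in a=R_a(w)$), and no ``away from home'' bookkeeping arises. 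Your finer-grained agents buy nothing here (conflating two occurrences of the same set under different names or worlds is harmless, since only the resulting core families matter), and they cost you the globalisation step.

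That step is where your proposal has a residual gap: your repair requires that every world carry at least one nonempty neighbourhood, which is not a hypothesis of the Observation, so as written your converse does not cover reflexive neighbourhood frames with a world $v$ such that $\nu_n(v)=\emptyset$ for all $n$. The difficulty is self-inflicted. Reading ``reflexive binary relation on $W$'' globally, together with the clause that $wR_aw'$ holds only if $a\in\mu(w,n)$ for some $n$, would force every agent to be named at every world; the paper's own running example refutes that reading (agent $a$ bears no name at world $u$ in Figure~\ref{fig:example}). The workable reading, and the one implicit in the paper's construction, is that $R_a$ is reflexive at the worlds where $a$ bears some name. Under that reading your home-world-only assignment is already a legal frame, the extra hypothesis evaporates, and the rest of your argument goes through unchanged.
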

\begin{proof}
Going from Kripke frames to neighborhood frames is straightforward, putting for all $w$, $\nu_n(w) = \{R_a(w): a \in \mu(w, n)\}$. For the other direction, we put for all $w$,
$\mu(w,n) = \nu_n(w)$,
$A = \bigcup_{n \in N}\bigcup_{w \in W} \mu(w,n)$,
and 
for all $a$,
$R_a(w)=a $. The reader might want to compare with the construction in the completeness proof of~\cite{HG93}, and also with the construction of ultrafilter frames in subsection \ref{ss:algebraicduality}, where the agents that are needed to witness the truth of formulas $S_n\phi$ in at a state $w$ are constructed by identifying them with truth sets of specific formulas, namely those also containing $\phi$ together with all formulas $\psi$ for which $E_n\psi$ is true at $w$. 
%
\end{proof}
Given a reflexive neighborhood frame $\mathfrak{F}$, the Kripke frame $\mathcal{F}^{\mathfrak{F}}$ as constructed above is reflexive but neither necessarily transitive nor symmetric. Furthermore, the frame $\mathcal{F}^{\mathfrak{F}^{\mathcal{F}}}$, although modally equivalent, is not necessarily isomorphic to $\mathcal{F}$ in the sense of the first order meta-theory. However, the identity map $\iota: \mathcal{F} \to \mathcal{F}^{\mathfrak{F}^{\mathcal{F}}}$ is a frame morphism, so it is isomorphic in the sense of frame morphisms. We can in fact present the equivalence between the two kinds of semantics as an equivalence of the corresponding categories. \subsection{Categorial Equivalence between Frames and Neighborhood frames}\label{ssA:categorial-equivalence}
We consider the category $\mathcal(C)$ of frames and frame morphisms as defined in Definition \ref{def:GH-morphisms}. On the other hand we consider the category $\mathcal(C')$ of neighborhood frames of Definition \ref{def:nbhd}, and their morphisms, i.e. maps $f:W \to W'$ satisfying:
\begin{itemize}
\item[] (there-n) $X\in\nu_n(w)$ implies $f[X]\in\nu'_n(f(w))$,
\item[] (back-n) $Y\in\nu'_n(f(w))$ implies $\exists X(f[X]=Y\ \&\ X\in\nu_n(w))$.
\end{itemize}
The definition is that of core bounded morphisms from \cite[Definition 4.6]{hansen2003monotonic}. The two constructions used in the proof of Observation \ref{representation} constitute in fact functors between the two categories, we only need to say what happens to maps $f:W \to W'$: as the underlying sets $W, W'$ remain unchanged, we can literally take the same map going both ways (from $\mathcal(C)$ to $\mathcal(C')$ or back). It remains to see that (i) if $f$ is a frame morphism $f: \mathcal{F}\to \mathcal{F'}$, then it is also a (core bounded) morphism $f: \mathfrak{F}^\mathcal{F}\to \mathfrak{F}^\mathcal{F'}$, and (ii) vice versa - if $f$ is a (core bounded) morphism $f: \mathfrak{F}\to \mathfrak{F'}$, then it is also a frame morphism $f: \mathcal{F}^\mathfrak{F}\to \mathcal{F}^\mathfrak{F'}$.

For (i), assume $f$ is a frame morphism $f: \mathcal{F}\to \mathcal{F'}$. For (there-n) assume that $X\in \nu_n(w)= \{R_a(w)\ |\ a\in\mu(w,n)\}$. So, there is some $a\in\mu(w,n)$ for which $f[X] = f[R_a(w)]$, and therefore there is some $a'\in\mu'(f(w),n)$ with $f[R_a(w)]=R'_{a'}(f(w))$ by (there). This shows that $f[X] = f[R_a(w) = R'_{a'}(f(w))] \in \nu'_n(f(w))$. For (back-n) assume $Y\in\nu'_n(f(w))$, so $Y = R'_{a'}(f(w))$ for some $a'\in \mu'(f(w),n)$. Then there is $a\in\mu(w,n$ with $ f[R_a(w)]=R'_{a'}(f(w))$ by (back), so  $R_a(w)\in\nu_n(w)$ is the required set.

For (ii), assume $f$ is a (core bounded) morphism $f: \mathfrak{F}\to \mathfrak{F'}$. For (there), assume $a\in\mu(w,n) = \nu_n(w)$ is given. By (there-n), $f[a]\in \nu'_n(f(w))=\mu'(f(w),n)$, so this is our $a'$. Now $R_a(w) = a$ and $f[R_a(w)]=f(a)=a'=R'_{a'}(f(w))$ as required. For (back), assume $a'\in\mu'(f(w),n) = \nu'_n(f(w))$ is given. By (back-n) there is some $a\in \nu_n(w) = \mu(w,n)$ with $f[a]=a'$. The rest is similar as above.

It remains to observe that the maps $\iota_{\mathcal{F}}:\mathcal{F} \to \mathcal{F}^{\mathfrak{F}^\mathcal{F}}$ and $\iota_{\mathfrak{F}}:\mathfrak{F} \to \mathfrak{F}^{\mathcal{F}^\mathfrak{F}}$, given as identity on $W$, are morphisms (iso-morphisms in fact) in the respective categories. 

Observe that nothing in the above (morphisms, their properties, or the translations between the two kinds of frames) depends on reflexivity of frames. In other words, we can establish an equivalence both for frames and neighborhood frames, and for reflexive frames and neighborhood reflexive frames.

\subsection{Completeness for reflexive neighborhood frames}
With this equivalence in hand we can proceed to show that $AX_\mathcal{N}$ is indeed complete with respect to the class of reflexive neighborhood frames. Together with the representation result this gives us that this logic is actually sound and complete with respect to the class of reflexive Kripke frames augmented with a naming function.
\begin{theorem}\label{completnessnhb}
 The logic $AX_\mathcal{N}$ is sound and complete w.r.t. the class of reflexive neighborhood frames.
\end{theorem}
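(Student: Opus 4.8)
The plan is to prove soundness and completeness separately, following the standard canonical-model strategy adapted to neighborhood semantics.

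\medskip

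\noindent\textbf{Soundness.} First I would verify that each axiom of $AX_\mathcal{N}$ is valid on every reflexive neighborhood frame and that the rules preserve validity. Using the satisfaction clauses $w\vDash S_n\phi \equiv \exists X\in\nu_n(w)\ X\subseteq\|\phi\|$ and $w\vDash E_n\phi \equiv \forall X\in\nu_n(w)\ X\subseteq\|\phi\|$, the monotone/normal behavior of $E_n$ and $S_n$ follows directly from set inclusion. Specifically, $K(E_n)$ and $Nec(E_n)$ hold because $E_n$ is a $\forall\forall$-modality (intersection over all neighborhoods behaves like a normal box); $Int_1$ follows since a witnessing $X\subseteq\|\phi\|$ together with $X\subseteq\|\phi\to\psi\|$ yields $X\subseteq\|\psi\|$; $Int_2$ holds because $\neg E_n\bot$ says some $X\in\nu_n(w)$ is nonempty (indeed $\neg(\forall X\ X\subseteq\emptyset)$ means some $X\neq\emptyset$), which then witnesses $S_n\top$ trivially; and $T(S_n)$ uses reflexivity: if $X\in\nu_n(w)$ with $X\subseteq\|\phi\|$ then $w\in X$ gives $w\vDash\phi$. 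This is routine and I would not belabor it.

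\medskip

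\noindent\textbf{Completeness.} The main work is completeness, which I would establish by building a canonical neighborhood model from maximal consistent sets. Let $W^c$ be the set of all $AX_\mathcal{N}$-maximal consistent sets, with valuation $\pi^c(w) = \{p \mid p\in w\}$. The key design choice is the canonical neighborhood function. For a maximal consistent set $w$ and a name $n$, I would define the neighborhoods $\nu_n^c(w)$ so as to capture exactly the behavior forced by the modalities: the natural candidate is to let $\nu_n^c(w)$ consist of the proof sets (truth sets) of formulas witnessing $S_n$, i.e. sets of the form $\widehat{\phi} = \{v\in W^c \mid \phi\in v\}$ for those $\phi$ with $S_n\phi\in w$, suitably closed under the $E_n$-constraints. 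Concretely, the subtle point is aligning $S_n$-witnesses with the $E_n$-envelope, mirroring the agent-construction described in the proof of Observation~\ref{representation}, where an agent witnessing $S_n\phi$ at $w$ is identified with a truth set containing $\phi$ together with all $\psi$ such that $E_n\psi\in w$. So I would take $\nu_n^c(w)$ to be generated by sets $\widehat{\theta}$ where $\theta$ ranges over formulas consistent with $\{\phi\}\cup\{\psi \mid E_n\psi\in w\}$ for each $S_n\phi\in w$.

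\medskip

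\noindent The heart of the argument is the \emph{Truth Lemma}: for every formula $\phi$ and every $w\in W^c$, $\ \mathfrak{F}^c, \pi^c, w\vDash\phi$ iff $\phi\in w$. This proceeds by induction on $\phi$, the modal cases being the crux. For the $S_n$ case, the forward direction requires that whenever $S_n\phi\in w$ there is a neighborhood $X\in\nu_n^c(w)$ with $X\subseteq\|\phi\|$, which holds by construction; conversely, if some $X\in\nu_n^c(w)$ lies inside $\|\phi\|$ then the generating formula forces $S_n\phi\in w$ using the $Int_1$ and monotonicity axioms together with an existence lemma guaranteeing that the relevant consistent extensions exist as maximal consistent sets. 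The $E_n$ case is dual and uses $K(E_n)$, $Nec(E_n)$, and a standard maximal-consistent-set existence argument to find, for each $E_n\phi\notin w$, a neighborhood $X$ and a point in $X$ outside $\|\phi\|$. Finally I must verify that $\nu_n^c$ is \emph{reflexive}: every $X\in\nu_n^c(w)$ contains $w$. This is exactly where $T(S_n)$ enters — the generating formula $\phi$ of any neighborhood satisfies $S_n\phi\to\phi$, so $\phi\in w$, hence $w\in\widehat{\phi}$, placing $w$ in the neighborhood. Having established the Truth Lemma over the reflexive canonical neighborhood frame, completeness follows in the usual way: any consistent formula belongs to some maximal consistent set and is therefore satisfied.

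\medskip

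\noindent\textbf{Main obstacle.} The hardest step is getting the canonical neighborhood function exactly right so that both modal clauses of the Truth Lemma go through simultaneously. The delicate interaction is that $S_n$ demands a witnessing neighborhood inside $\|\phi\|$ while $E_n$ quantifies universally over \emph{all} neighborhoods, so the family $\nu_n^c(w)$ must be rich enough to witness every true $S_n\phi$ yet constrained enough that no spurious neighborhood makes a false $E_n$-formula true. The $Int_2$ axiom ($\neg E_n\bot\to S_n\top$) is what synchronizes the nonemptiness of the neighborhood system with consistency, ensuring the two quantifier patterns cohere; reconciling these via the $E_n$-envelope construction, and proving the corresponding existence lemma for maximal consistent sets, is where the real care is needed.
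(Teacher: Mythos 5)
Your overall strategy is exactly the paper's: build a canonical neighborhood model over all $AX_\mathcal{N}$-maximal consistent sets, define neighborhoods from $S_n$-witnesses wrapped in the $E_n$-envelope, get reflexivity from $T(S_n)$, and let $Int_2$ synchronize nonemptiness of $\nu_n^c(w)$ with $\neg E_n\bot$. However, at the one point where you commit to a concrete definition of the canonical neighborhood function --- the step you yourself identify as the crux --- what you write down does not work. Taking $\nu_n^c(w)$ to be ``generated by sets $\widehat{\theta}$ where $\theta$ ranges over formulas \emph{consistent with} $\{\phi\}\cup\{\psi\mid E_n\psi\in w\}$'' admits far too many neighborhoods: $\theta=\top$ is consistent with any consistent such set, so $\widehat{\top}=W^c$ becomes a neighborhood, and then $E_n\chi$ is falsified at $w$ for every non-theorem $\chi$ with $E_n\chi\in w$; the left-to-right $E_n$ case of the Truth Lemma collapses. (Any ``generation'' that closes under supersets has the same effect, because the $E_n$ clause quantifies universally over \emph{all} members of $\nu_n(w)$.) The definition that works, and the one the paper uses, is
$$
\nu_n^c(w)=\Bigl\{\,|\phi|\cap\bigcap_{E_n\psi\in w}|\psi|\ :\ S_n\phi\in w\,\Bigr\},
$$
where $|\phi|=\{\Delta\mid\phi\in\Delta\}$: each neighborhood is \emph{exactly} the truth set of $\{\phi\}\cup\{\psi\mid E_n\psi\in w\}$, membership rather than mere consistency. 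With this, the $E_n$ left-to-right direction is immediate, and the genuine work sits in the $S_n$ right-to-left direction, where one shows that $|\phi|\cap\bigcap_{E_n\psi\in w}|\psi|\subseteq|\chi|$ forces $\vdash\phi\wedge\bigwedge_m\psi_m\rightarrow\chi$ for finitely many $\psi_m$ with $E_n\psi_m\in w$, whence $S_n\chi\in w$ by $Nec(E_n)$, $K(E_n)$ and $Int_1$.

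A second, smaller gap: your reflexivity check only verifies $w\in|\phi|$ via $T(S_n)$, but with the corrected neighborhoods you also need $w\in|\psi|$ for every $E_n\psi\in w$, and $E_n$ is not factive, so this does not come from $T(S_n)$ alone. The required chain is: given $S_n\phi\in w$ and $E_n\psi\in w$, normality of $E_n$ yields $E_n(\phi\rightarrow\psi)\in w$, then $Int_1$ yields $S_n\psi\in w$, and only then does $T(S_n)$ give $\psi\in w$. Both issues are repairable, and once repaired your argument coincides with the paper's proof.
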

\begin{proof}
The proof proceeds by a standard canonical model construction for neighborhood semantics.
\end{proof}
Besides completeness for the class of reflexive Kripke frames that comes as a corollary of the completeness and representation results for neighborhood frames, this connection allows us to relate to existing results in the model and the proof theory of non-normal modal logics. We are not aware of an algebraic duality, Goldblatt-Thomason definability theorem, or van Benthem theorem existing in the literature which would apply to our case as it is\footnote{Recall that in contrast to general model theory of monotone neighborhood logics we have additional $\forall\forall$ type modalities in the language, which in particular affects the definition of standard translation or ultrafilter extensions of frames and models.}, so we present it in more details in the next section.

On the proof theoretical side, there has been to our knowledge no study of sequent calculi dedicated to epistemic logic with names specifically. \cite{lellmann2019}, however, provides a sound and complete nested sequent system that has cut elimination for Brown's "logic of ability" \cite{brown1988}. This bi-modal logic receives the same interpretation in neighborhood semantics as epistemic logic with names, with the $S_n$ and the $E_n$ modalities corresponding to Brown's "can" and "will" operators, respectively. The only difference is that Brown's neighborhood frames are not necessarily reflexive, and so the corresponding axiomatization omits $T(S_n)$. We conjecture that the system in \cite{lellmann2019} can be extended with the standard sequent rule for T, c.f.~\cite{poggiolesi2008cut}, without breaking the cut elimination result, but we leave the development of the details for future work.

\subsection{Algebraic duality}\label{ss:algebraicduality}
Any of the categories of frames described in \ref{ssA:categorial-equivalence} (we leave out reflexivity for this part) can be seen as dual to the category of the following modal algebras with names. We pick the neighborhood frames of Definition \ref{def:nbhd} to show this is so.  A modal algebra with names over $N$ is  $\mathbb{A}=(A,\wedge,\neg,\{E_n,S_n\ |\ n\in N\})$, a Boolean algebra with modalities satisfying the following equations:
\begin{align*}
E_n\top &= \top\ & \ \neg E_n\bot &= S_n\top \\
E_n(a\wedge b) &= (E_n a\wedge E_n b)\ & \ S_n a\wedge E_n b &\leq S_n(a\wedge b).
\end{align*}
It is not hard to see that this presentation is equivalent to the one obtained by simply algebraizing $AX_\mathcal{N}$ (without the $T(S_n)$ axiom). Homomorphisms of modal algebras with names are Boolean homomorphisms preserving the $S_n,E_n$ modalities. 
\par\noindent
\textbf{Ultrafilter frames}: Given a modal algebras with names $\mathbb{A}=(A,\wedge,\neg,\{E_n,S_n\ |\ n\in N\})$ we construct its ultrafilter frame over the set of ultrafilters on $\mathbb{A}$ as $\mathfrak{F}^{\mathbb{A}} = (\mathcal{U}(\mathbb{A}), \nu^A_n)$ where for $u\in \mathcal{U}(\mathbb{A})$ 
$$
\nu^A_n(u) := \{\hat{a}\cap\bigcap\limits_{E_n d\in u}\hat{d}\ |\ S_n a\in u \}.
$$
The sets $\hat{a}=\{u\in \mathcal{U}(\mathbb{A})\ |\ a\in u\}$ constitute the (clopen) basis of a topology on $\mathcal{U}(\mathbb{A})$. 

For a homomorphism $h:\mathbb{A}\to \mathbb{B}$, the inverse-image map $h^{-1}[\ ]: \mathcal{U}(\mathbb{B})\to \mathcal{U}(\mathbb{A})$ is a (bounded core) morphism from $ \mathfrak{F}^{\mathbb{B}}$ to $\mathfrak{F}^{\mathbb{A}}$. This is not immediate to see, and we will hint at the interesting part, namely that it satisfies the (there-n) condition: Assume $Y\subseteq \mathcal{U}(\mathbb{B})$ with $Y\in\nu_n^B(u)$. It means that for some $b\in B$, $Y = \{\hat{b}\cap\bigcap\limits_{E_n c\in u}\hat{c}\ |\ S_n b\in u \}$. We want to show that for some $a\in A$ with $S_n h(a)\in u$, 
$$
\{h^{-1}[v]\ |\ v\in \{\hat{b}\cap\bigcap\limits_{E_n c\in u}\hat{c}\} = \{\hat{a}\cap\bigcap\limits_{E_n h(d)\in u}\hat{d}\} \in \nu^A_n(h^{-1}[u]).
$$
By $S_n b\in u$ we know there is at least one such candidate $a$: we can consider $h^{-1}(b)$ if $b\in Rng(h)$, or $\top$ otherwise. Now, in both cases, 
$$
    v\in \{\hat{b}\cap\bigcap\limits_{E_n c\in u}\hat{c}\} \equiv
    \{b\}\cup\{c\ |\ E_n c\in u\}\subseteq v \equiv
    h^{-1}[\{b\}\cup\{c\ |\ E_n c\in u\}]\subseteq  h^{-1}[v] \equiv
    h^{-1}[v] \in \{\hat{a}\cap\bigcap\limits_{E_n h(d)\in u}\hat{d}\}.
$$
For the last equivalence, observe that $h^{-1}(b) = a$ in the first case and that $\hat{a} = \hat{\top} = \mathcal{U}(\mathbb{A})$ in the second case, and $h^{-1}[\{c\ |\ E_n c\in u\}] = \{d\ |\  E_n h(d)\in u\} $.

\par\noindent
\textbf{Complex algebras}: Given a neighborhood frame $\mathfrak{F} = (W,\nu_n) $, we construct its complex algebra as $\mathbb{A}^{\mathfrak{F}}= (\mathcal{P}W,\cap,-,\{E_n,S_n\ |\ n\in N\})$ where
\begin{align*}
E_n(X) &:= \{w\ |\ \forall Y\in \nu_n(w)\ Y\subseteq X\},\\
S_n(X) &:= \{w\ |\ \exists Y\in \nu_n(w)\ Y\subseteq X\}.
\end{align*}
For a (bounded core) morphism $f:\mathfrak{F}\to \mathfrak{G} $, the inverse-image map $f^{-1}[\ ]: \mathbb{A}^{\mathfrak{G}}\to \mathbb{A}^{\mathfrak{F}}$ is a homomorphism of modal algebras with names.

The map $\hat{.}:\mathbb{A}\to \mathbb{A}^{\mathfrak{F}^{\mathbb{A}}}$ assigning $a\mapsto \hat{a}$ is an embedding (this underlies the completeness proof for the logic if taken without the $T(S_n)$ axiom). The frame $\mathfrak{F}^{\mathbb{A}^{\mathfrak{F}}} = ue(\mathfrak{F})$ is the ultrafilter extension of $\mathfrak{F}$. With a little extra work one can prove that taking ultrafilter extension reflects frame validity: $ue(\mathfrak{F})\vDash \phi$ implies $\mathfrak{F}\vDash \phi$. Clearly, the inverse-image maps injective morphisms to surjective ones and vice versa. Also for $\mathfrak{F} = \biguplus\limits_{i\in I} \mathfrak{F}_i $, we can show that $\mathbb{A}^{\mathfrak{F}} \simeq \prod\limits_{i\in I}\mathbb{A}^{\mathfrak{F}_i} $. Putting all this to work, one can show literally by the standard argument based on the duality and Birkhoff's theorem (cf. \cite[Theorem 7.23]{hansen2003monotonic}), the following definability theorem:
\begin{theorem}
Let K be a class of neighborhood frames over $N$ which is closed under taking ultrafilter extensions. Then K is definable in the language $\mathcal{L_{N}}$ iff K is closed under disjoint unions, generated subframes and bounded morphic images, and reflects ultrafilter extensions.
\end{theorem}
%
%
\section{Common and distributed knowledge with names}\label{sec:common}
In this section we extend the language of epistemic logic with names with the two most well-known collective epistemic modalities: common and distributed knowledge. These notions are mentioned but explicitly set aside in~\cite{HG93}. \cite{moses1988programming} and after that \cite[pp.213-218]{RAK} study what they call non-rigid common and distributed knowledge with essentially the same goal as us: formulating a meaningful version of these notions in contexts where there can be uncertainty about who is in the group. As already observed in~\cite{HG93}, non-rigid common and distributed knowledge turn out to be related but subtly different from the way we define these two group attitudes. We review these differences below. Common and distributed knowledge have also been studied in term modal logic~\cite{wang2018names,naumov2018everyone}, but in languages with different expressive power than epistemic logic with names.
\subsection{Common knowledge with names}
In the context where group membership may vary from state to state, an appropriate notion of common knowledge for group $n$ must take into account not only what the agents that are members of $n$ in the current state consider possible (and what they consider others might consider possible, and so on...), but also who these group members consider might be in the group. Here we extend epistemic logic with names with a common knowledge operator in a way that is as close as possible to the standard definition of common knowledge in multi-agent epistemic logic. After presenting the semantics and returning to our running example we present a sound and complete axiomatization. We discuss briefly at the end of the section the differences between our notion and non-rigid common knowledge defined in~\cite{moses1988programming,RAK}. 

We write $C_n$ for common knowledge among members of the group \emph{named} $n$, and $\mathcal{L_{NC}}$ for the extension of language $\mathcal{L_{N}}$ with operator $C_n$. Similarly as for the standard common knowledge operators, we can unfold $C_n$ in terms of $E_n$: 
$$
C_n \phi := \bigwedge_{k\in \mathbb{N^*}} E^k_n \phi.
$$
where $E^1_n \phi := E_n\phi$ and $E^{k+1}_n \phi := E_nE^k_n\phi$.
The underlying relation $R_{n}$ is constructed as follows: $R_{n} =\{(w,v)\mid \text{ for some } i\in \mu(w,n), (w,v) \in R_i \}.$ 
We write $R_{C_n}$ for $R^+_{n}$, the transitive closure of $R_{n}$.  The semantics of the operator is given by the following expected clause: 
\begin{center}
    $M, w \models C_n \phi$\ iff\ for all $v$ such that $(w,v) \in R_{Cn} $, $M, v \models \phi$.
\end{center}
As an example, consider again the model depicted in Figure~\ref{fig:example}. 
In state $w$, it is common knowledge among Trump supporters that he was impeached or lost the election, $w\models C_n (p\lor q)$. Note, however, that this is not common knowledge in the standard sense between $a$ and $b$, even though the set of Trump supporters is exactly $\{a,b\}$ in $w$. 
Conversely, in state $v$, it is not common knowledge among trolls that Trump won the election ($v\not\models C_m\neg q$) even though the set of trolls is just $\{b\}$ and it is common knowledge in set $\{b\}$ in the standard sense. Despite these differences, semantically our new operator still corresponds to a transitive closure operation, and is therefore captured at the axiomatic level in a way similar to  the standard notion, as our axiomatization below reflects. 
\begin{definition}[System $AX_\mathcal{NC}$]
The logic $AX_\mathcal{NC}$ is the logic $AX_\mathcal{N}$ extended with the following axioms and rules for $C_n$:
\begin{align*}
   & K(C_n) & C_n(\phi\rightarrow\psi)\rightarrow (C_n\phi  \rightarrow C_n\psi) \\
   & FP & C_n\phi\rightarrow E_n(\phi \wedge C_n\phi)\\
   & Ind & \textit{From } \phi\rightarrow E_n(\phi \wedge \psi), \textit{ infer } \phi\rightarrow C_n\psi \label{Ind}\\
   & Nec(C_n) & \mbox{From } \varphi, \mbox{ infer } C_n\varphi 
\end{align*}
\end{definition}
\begin{theorem}\label{thmCncomplete}
$AX_\mathcal{NC}$ is sound and complete with respect to the class of models.
\end{theorem}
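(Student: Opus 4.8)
The plan is to establish soundness by a routine verification that each new axiom and rule is valid, and then to prove completeness by a canonical-model construction augmented with the standard finitary technique for handling the transitive-closure operator $C_n$. Soundness of $K(C_n)$ and $Nec(C_n)$ follows because $C_n$ is interpreted by a universal modality over the relation $R_{C_n}$. For the fixed-point axiom $FP$, I would unfold the semantics: since $R_{C_n} = R_n^+$ and $R_n = \bigcup_{i \in \mu(w,n)} R_i$, a single $R_{C_n}$-step from $w$ either reaches a point $v$ with $w R_n v$ (whence $\phi$ holds there and $C_n\phi$ holds there by transitivity) capturing the $E_n(\phi \wedge C_n\phi)$ reading. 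For the induction rule $Ind$, I would argue semantically that if $\phi \to E_n(\phi \wedge \psi)$ is valid, then along any $R_n$-path from a $\phi$-state the formula $\psi$ persists, and a straightforward induction on path length gives $\phi \to C_n\psi$. The subtlety here is purely that the relation $R_n$ at $w$ depends on $\mu(w,n)$, but the semantic clauses for $E_n$ and the definition of $R_n$ are aligned by construction, so the standard arguments transfer.

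For completeness, I would build the canonical model using maximal consistent sets (MCSs) as worlds, but because $C_n$ is not finitely axiomatizable as an ordinary modality, a plain canonical model will not yield the truth lemma for $C_n$ directly. The standard remedy is the finitary/filtration method: I would fix a formula $\phi_0$ that is not a theorem, take a suitable finite closure set $\Sigma$ containing $\phi_0$ and closed under subformulas, negations, and crucially under the subformulas generated by $C_n$ (including $E_n(\psi \wedge C_n\psi)$ for relevant $\psi$), and form the finite model whose worlds are the $\Sigma$-atoms (maximal consistent subsets of $\Sigma$). I would define, for each atom, a naming function and individual relations using Observation \ref{representation}'s construction so that the neighborhood/Kripke structure validates the $E_n$ and $S_n$ clauses exactly as in the completeness proof of Theorem \ref{completnessnhb}, and then define $R_n$ and its transitive closure $R_{C_n}$ on this finite model.

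The main obstacle, and the place where the argument requires real care, is the truth lemma for $C_n$ on this finite model. The forward direction (if $C_n\psi \in$ atom then $\psi$ holds at all $R_{C_n}$-successors) follows from $FP$ and the closure conditions by a standard reachability induction. The harder converse direction uses the $Ind$ rule: I would show that the disjunction $\chi$ of all $\Sigma$-atoms containing $C_n\psi$ satisfies $\chi \to E_n(\chi \wedge \psi)$ as a provable implication, by verifying that every $R_n$-successor atom again contains $C_n\psi$ (using $FP$ to push $C_n\psi$ forward along single $E_n$-steps); then $Ind$ yields $\chi \to C_n\psi$, which closes the induction and shows that atoms reachable by $R_{C_n}$ from a $\neg C_n\psi$-atom can be separated appropriately. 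Establishing that the finitely many individual relations reconstructed from the neighborhood data correctly realize $R_n = \bigcup_{i\in\mu(w,n)} R_i$ so that $R_{C_n}$ on the finite model matches the canonical reachability relation is the delicate bookkeeping step; once it is in place, a non-theorem $\phi_0$ is falsified at some atom of the finite model, giving completeness with respect to the class of all models.
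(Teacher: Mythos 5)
Your overall architecture matches the paper's: soundness by direct semantic verification, completeness via a finite closure set, maximal consistent subsets of the closure as worlds, Grove--Halpern-style canonical agents for $E_n$/$S_n$, and the rule $Ind$ to discharge the hard direction of the truth lemma for $C_n$. But the formula you feed to $Ind$ is the wrong one, and this breaks the key step. You take $\chi$ to be the disjunction of all atoms \emph{containing} $C_n\psi$ and derive $\chi \to C_n\psi$. That conclusion is vacuous: every disjunct of $\chi$ already proves $C_n\psi$ by deductive closure, so nothing is gained, and in particular you have not established the right-to-left direction of the truth lemma, which must start from the \emph{semantic} hypothesis that every $n$-path from $w$ leads into $\psi$ and conclude $C_n\psi \in w$. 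The correct choice (and the one in the paper) is $\omega := \bigvee_{u \in W_{n,\psi}} \underline{u}$, where $W_{n,\psi}$ is the set of atoms from which all $n$-paths are into $\psi$; one proves $\vdash \omega \to E_n(\psi \wedge \omega)$, applies $Ind$ to get $\vdash \omega \to C_n\psi$, and concludes because $\underline{w}$ is a disjunct of $\omega$.

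Moreover, establishing the premise $\vdash \omega \to E_n(\psi\wedge\omega)$ is where the work specific to names lives, and your proposal does not engage with it. For $\vdash \omega \to E_n\psi$ you must split on whether $\mu(u,n)=\emptyset$: when the name is empty you need $Int_2$ (via $\neg S_n(p\vee\neg p)$) to obtain $E_n(p\wedge\neg p)$ and hence $E_n\psi$, since $E_n$ is not factive. For $\vdash \omega \to E_n\omega$ you must, assuming a disjunct $\underline{u_0}$ with $\underline{u_0}\wedge\neg E_n\omega$ consistent, actually construct a witness atom $v$ with $u_0 R_{a_{\top,u_0,n}} v$ and $v \notin W_{n,\psi}$; this requires first showing $S_n\top \in u_0$ and $D_{\top,u_0,n} \subseteq u_0$, and then a combinatorial consistency argument (choosing one formula to exclude from each atom of $W_{n,\psi}$ and applying the distributive law) before Lindenbaum can be invoked. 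None of this is the bookkeeping you flag about reconstructing $R_n$ from neighborhood data; it is the heart of the proof and is missing from your plan.
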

\begin{proof}
Soundness is straightforward. 
To prove completeness, we adapt the usual method of building a canonical model with maximal consistent extensions of the finite closure of a formula to circumvent non-compactness (see for instance \cite[Section 7.3]{DEL}), and combine it with the canonical model construction introduced in \cite{HG93}.
The combination is tedious, in part due to the non-factive nature of operators $C_n$ (and $E_n$) but works as expected. The heart of the proof is case 6. of Lemma \ref{lemmaCn} below. 
We start by defining an appropriate notion of closure.
\begin{definition}[Closure $cl(\chi)$]
Let $\chi\in\mathcal{L_{NC}}$ and $Sub(\phi)$ be the set of subformulas of $\phi$. 
The closure $cl(\chi)$ of $\chi$ is the smallest set such that:
\begin{enumerate}
    \item $\chi\in cl(\chi)$
    \item if $\phi\in cl(\chi)$, then $Sub(\phi)\in cl(\chi)$,
    \item if $\phi \in cl(\chi)$ and $\phi$ is not of the form $\neg\psi$, then $\neg\phi\in cl(\chi)$,
    \item $S_n( p\vee\lnot p)\in cl(\chi)$, 
    \item $E_n( p\wedge\lnot p)\in cl(\chi)$,
    \item if $E_n \phi\in cl(\chi)$, then $S_n \phi\in cl(\chi)$,
    \item if $C_n\phi\in cl(\chi)$, then $E_n\phi\in cl(\chi)$,
    \item if $C_n\phi\in cl(\chi)$, then $E_nC_n\phi\in cl(\chi)$.
\end{enumerate}
\end{definition}
\begin{lemma}
For all $\chi\in\mathcal{L_{NC}}$, $cl{(\chi)}$ is finite. 
\end{lemma}
\begin{proof}
Standard.
\end{proof}
\begin{definition}[Maximal consistent sets in $cl(\chi)$]
A set $\Gamma$ is maximal consistent in $cl(\chi)$ when:
\begin{enumerate}
    \item $\Gamma \subseteq cl(\chi)$, 
    \item $\Gamma \nvdash\bot$,
    \item there is no $\Delta$, such that $\Gamma\subset\Delta$, $\Delta\subseteq cl(\chi)$, and $\Delta \nvdash\bot$.
\end{enumerate}
\end{definition}
\begin{lemma}[Lindenbaum]
For every $\Gamma \subseteq cl(\chi)$, if $\Gamma$ is consistent, then there is a set $\Delta$, such that $\Gamma\subseteq\Delta$ and $\Delta$ is maximal consistent in $cl(\chi)$. 
\end{lemma}
\begin{proof}
The proof goes via the standard method of enumeration of all formulas in $cl(\chi)$ and sequential construction. 
\end{proof}
\begin{definition}[$n$-path,$n$-path into $\phi$]
Given a model $M= (W,A,\mathcal{R},\pi,\mu)$, a $n$-path from $w$ is a sequence $<w_0, \cdots, w_k>$ with $w_0, \cdots w_k \in W$ and $k\in\mathbb{N^*}$, such that 
$w_0=w$ and for all $0\leq m < k$, there is an agent $i\in\mu(w_m,n)$ such that $w_m R_i w_{m+1}$.
A $n$-path into $\phi$ is a $n$-path such that for all $1\leq m\leq k$, $\phi\in w_m$.
\end{definition}
We build the canonical model in a similar manner as \cite{HG93}. 
%
\begin{definition}[Canonical model $M_{cl(\chi)}$]
Given a formula $\chi\in\mathcal{L_{NC}}$, the canonical model for the closure of $\chi$ is $M_{cl(\chi)}= (W,A,\mathcal{R},\pi,\mu)$ where: 
\renewcommand{\labelitemi}{--}
\renewcommand{\labelitemii}{--}
\begin{itemize}
 \item $W$ is the set of all maximal consistent sets within $cl(\chi)$,
 \item $A$ is the set of agents, constructed as follows: 
 \begin{itemize}
     \item for every $w \in W$, and every formula $S_n\phi \in w$, define $D_{\phi,w,n}=\{\phi\}\cup\{\psi \mid E_n\psi \in w \}$, 
     \item for every such set $D_{\phi,w,n}$, the set  $a_{\phi,w,n}=\{v\in W\mid D_{\phi,w,n}\subseteq v\}$ is an agent. 
      \end{itemize}
 \item for all $w, v \in W$, and all $a\in A$, $(w,v)\in R_a$ iff $w\in a$ and $v\in a$,
  \item for all $p \in \Phi$, $\pi(p) = \{w\mid p\in w\}$, 
 
 \item $\mu$ is such that $\mu(w,n)=\{a_{\phi,w,n}\mid S_n\phi\in w\}$.\footnote{As discussed in \cite{HG93}, this construction  guarantees at the same time that all sentences of the form $S_n\phi$ have a witness (an agent named $n$ who knows $\phi$), and that this witness also knows whatever sentence everybody with name $n$ should know.} 
\end{itemize}
\end{definition}
\comment{
\begin{definition}[Canonical model $M_{cl(\chi)}$]
Given a formula $\chi\in\mathcal{L_{NC}}$, the canonical model for the closure of $\chi$ is $M_{cl(\chi)}= (W,A,\alpha,R,\pi,\mu)$ where: 
\begin{itemize}
 \item $W$ is the set of all maximal consistent sets within $cl(\chi)$,
 
 \item $A = \mathcal{P}(W)$, 
 
 \item for all $w \in W$, $A_w=\alpha(w)=\{a \mid w\in a\}$,
 
 \item for all $w, v \in W$, and all $a\in A$, $(w,v)\in R_a$ iff $w\in a$ and $v\in a$,
 
 \item for all $p \in \Phi$, $\pi(p) = \{w\mid p\in w\}$, 
 \item for every $w \in W$, and every formula $S_n\phi \in w$, define $D_{\phi,w,n}=\{\phi\}\cup\{\psi \mid E_n\psi \in w \}$. The set $D_{\phi,w,n}$ corresponds to some agent $a_{\phi,w,n}=\{v\in W\mid D_{\phi,w,n}\subseteq v\}$. Then, let $\mu$ be such that $\mu(w,n)=\{a_{\phi,w,n}\mid S_n\phi\in w\}$.\footnote{As discussed in \cite{HG93}, this guarantees at the same time that all sentences of the form $S_n\phi$ have a witness (an agent named $n$ who knows $\phi$), and that this witness also knows whatever sentence everybody with name $n$ should know.} 
 
\end{itemize}
\end{definition}
}
\begin{lemma}\label{lemmaCn}
Let $M_{cl(\chi)}= (W,A,\mathcal{R},\pi,\mu)$ be the canonical model for $cl(\chi)$. We write $\underline{w}$ for the (finite) conjunction $\bigwedge_{\phi\in w }\phi$. For all $w,v\in W$:
\begin{enumerate}
    \item if $\phi\in cl(\chi)$ and $\underline{w}\vdash\phi$, then $\phi\in w$ (i.e. $w$ is deductively closed within $cl(\chi)$)
    \item if $\lnot\phi\in cl(\chi)$, then $\lnot\phi\in w$ iff $\phi\not\in w$,
    \item if $(\phi\wedge\psi)\in cl(\chi)$, then $(\phi\wedge\psi)\in w$ iff $\phi\in w$ and $\psi\in w$,
    \item if $S_n\phi\in cl(\chi)$, then $S_n\phi\in w$ iff for all $v\in W$ such that $wR_{a_{\phi,w,n}}v$, $\phi\in v$,  
     \item if $E_n\phi\in cl(\chi)$, then $E_n\phi\in w$ iff 
     for all $\psi$ with $S_n\psi\in w$, and all $v\in W$ such that $wR_{a_{\psi,w,n}}v$, $\phi\in v$, 
   \comment{  \item if $E_n\phi\in w$, either 
 
     i) $\phi\in w$ or ii) $E_n(p\wedge\lnot p)\in w$ (and therefore $E_n\psi\in w$ for all $E_n\psi\in cl(\chi)$).
     } 
    \item if $C_n\phi\in cl(\chi)$, then $C_n\phi\in w$ iff every $n$-path from $w$ is a path into $\phi$ 
    and into $C_n\phi$. 
\end{enumerate}
\end{lemma}
\begin{proof}
Other cases are proved in the standard way; we give below the full detail of the crucial case 6. 
To prove 6., 
assume $C_n\phi\in cl(\chi)$.
\par\smallskip
For the left-right direction,
%
let $C_n\phi\in w$. 
We show that for all $n$-paths $<w, w_1, \ldots, w_k>$, and all $k\in\mathbb{N^*}$, $\phi\in w_k$ and $C_n\phi\in w_k$, by induction on $k$.

Base case ($k=1$): 
Let $v$ be such that $wR_nv$. 
By definition of $R_n$, there is some agent $a\in\mu(w,n)$ such that $wR_av$. 
By definition of $\mu(w,n)$, $a$ is $a_{\psi,w,n}$ for some $\psi$ for which $S_n\psi\in w$. 
Since $C_n\phi\in w$ and $\vdash C_n\phi \rightarrow E_n(\phi\wedge C_n\phi)$ (FP), by $MP$, $K(E_n)$ and $PL$ we obtain $\underline{w}\vdash E_n\phi$ and $\underline{w}\vdash E_nC_n\phi$. 
Since  both $E_n\phi,E_nC_n\phi \in cl(\chi)$, we know that
$E_n\phi\in w$ and $E_nC_n\phi\in w$. 

By definition of $D_{\psi,w,n}$, since $E_n\phi\in w$ and $E_nC_n\phi\in w$, then both $\phi\in D_{\psi,w,n}$ and $C_n\phi\in D_{\psi,w,n}$. 
By definition of $R_{a_{\psi,w,n}}$, since $wR_{a_{\psi,w,n}}v$, then $v\in a_{\psi,w,n}$.  
By definition of $a_{\psi,w,n}$, for all $u\in a_{\psi,w,n}$, $D_{\psi,w,n}\subseteq u$. Therefore, in particular, $D_{\psi,w,n}\subseteq v$, and since $\phi\in D_{\psi,w,n}$, and $C_n\phi\in D_{\psi,w,n}$, it follows that $\phi\in v$ and $C_n\phi\in v$ as desired.

Step case: ($k=m+1$).  
If there is a $n$-path from $w$ $<w_0,\cdots, w_ {m+1}>$, then there is a $n$-path from $w$ to $w_m$ and $w_mR_nw_{m+1}$. By the induction hypothesis, $\phi\in w_m$ and $C_n\phi\in w_m$. To show that $\phi\in w_{m+1}$ and $C_n\phi\in w_{m+1}$, it suffices to repeat the arguments used for the base case.
\par\smallskip
For the right-left direction,
%
assume that for all $n$-paths $<w,w_1,\ldots, w_k>$ and for all $k\geq 1$, $\phi\in w_k$ and $C_n\phi\in w_k$.
We show that $C_n\phi\in w$.
We will write $W_{n,\phi}$ for the set of states from which all $n$-paths are into $\phi$. 
Let $\omega:=\bigvee_{u\in W_{n,\phi}}\underline{u}$.
By our assumption, $w\in W_{n,\phi}$ which means that $\underline{w}$ is a disjunct of $\omega$. Therefore, $\vdash\underline{w}\rightarrow \omega$. 
If we show that $\vdash\omega\rightarrow E_n(\phi \wedge\omega)$, then applying the induction rule we can infer $\vdash\omega\rightarrow C_n\phi$. Since we already have $\vdash\underline{w}\rightarrow \omega$, we obtain $\vdash \underline{w}\rightarrow C_n\phi$, and therefore, since $C_n\phi\in cl({\chi})$, also
$C_n\phi\in w$.

It remains to show that $\vdash\omega\rightarrow E_n(\phi \wedge\omega)$. We will demonstrate separately that $\vdash\omega\rightarrow E_n\phi$, and that $\vdash\omega\rightarrow E_n\omega$.

To show that $\vdash\omega\rightarrow E_n\phi$,  
suppose for contradiction it is not the case. Then $\omega\wedge \lnot E_n\phi$ is consistent. This implies that there is some disjunct $\underline{u}$ of $\omega$ such that $\underline{u}\wedge\lnot E_n\phi$ is consistent. 
Since $u$ is maximal 
(and $\lnot E_n\phi\in cl({\chi})$), 
$\lnot E_n\phi\in u$. There are two cases to consider:
 
Case i): Assume $\mu(u,n)\neq\emptyset$ ($n$ is not an empty name in $u$). 
Let $a$ be an arbitrary member of $\mu(u,n)$.
By definition of $R_a$, we have $uR_au$. 
By assumption that all $n$-paths from $u$ are into $C_n\phi$, we know that $C_n\phi\in u$. 
Since $\vdash C_n\phi\rightarrow E_n\phi$ 
and $E_n\phi\in cl(\chi)$, we know that $E_n\phi\in u$. 
But then, since both $E_n\phi\in u$ and $\lnot E_n\phi\in u$, $u$ is not consistent. A contradiction. 

Case ii) Assume $\mu(u,n)=\emptyset$ ($n$ is an empty name in $u$). 
By definition of $\mu(u,n)$, this implies that there is no $\psi$ such that $S_n\psi\in u$. 
In particular, $S_n ( p\vee\lnot p)\notin u$. 
Therefore, $\lnot S_n (p\vee\lnot p)\in u$. 
By axiom $Int_2$, we obtain $\vdash \lnot S_n (p\vee\lnot p) \rightarrow E_n (p\wedge\lnot p)$, 
and $E_n(p\wedge\lnot p)\in u$. 
And since $\vdash E_n(p\wedge\lnot p)\rightarrow E_n(\phi\wedge\lnot \phi)$ and $\vdash E_n(\phi\wedge\lnot \phi) \rightarrow E_n\phi$, 
we have $E_n\phi\in u$. 
But then, again, both $E_n\phi\in u$ and $\lnot E_n\phi\in u$, and so $u$ is not consistent. A contradiction. 
This concludes the proof that $\vdash\omega\rightarrow E_n\phi$.

We now turn to the proof of $\vdash\omega\rightarrow E_n\omega$.
Suppose for contradiction it is not the case.
Then $\omega\wedge \lnot E_n\omega$ is consistent, 
and there must be some disjunct $\underline{u_0}$ of $\omega$ such that $\underline{u_0}\wedge\lnot E_n\omega$ is consistent.
We need to construct $v\in W$ (a MCS in $cl(\chi)$) such that $\forall u\in W_{n,\phi}\ \exists \psi_u\in u$ with $\psi_u\notin v$, and we need to find an agent $a\in \mu(u_0,n)$ with $u_0R_a v$. 

Observe first that $S_n\top\in u_0$: if not, then $\neg E_n\bot\notin u_0$ and therefore $E_n\bot\in u_0$, and consequently $u_0\vdash E_n\omega$ which is a contradiction. Therefore we take $a_{\top,u_0,n}$ to be the chosen agent in $\mu(u_0,n)$. Next, observe that $D_{\top,u_0,n}\subseteq u_0$: this follows by $S_n\top \wedge E_n\gamma \vdash S_n\gamma \vdash \gamma$ for each $E_n\gamma\in u_0$. This means we need $D_{\top,u_0,n}\subseteq v$ to ensure that $u_0R_a v$.
To sum up, $v$ needs to include all formulas in $D_{\top,u_0,n}$, and for each $u\in W_{n,\phi}$ it has to pick a formula $\psi_u\in u$ to exclude (i.e. $\neg\psi_u\in v$). We need to prove that at least one such choice is consistent: assume for a contradiction that it is not so. This means that for each choice $C_i = \{\psi_u\ |\ u\in W_{n,\phi}\}$ ($i$ ranges $1\ldots \Pi_{u\in W_{n,\phi}}|u|$), we have that $D_{\top,u_0,n}\vdash \bigvee C_i$. Therefore $D_{\top,u_0,n}\vdash \bigwedge_i\bigvee C_i$ and, by the distributive law, $D_{\top,u_0,n}\vdash \omega$. But this entails that $u_0\vdash E_n\omega$, contradicting our original assumption. So, one such choice is possible to make, and by the Lindenbaum lemma we can construct $v\in W$ as required.

We can deduce that $v\notin W_{n,\phi}$ (because for any $v\in W_{\phi,n},\  \vdash\underline{v}\rightarrow\omega$). 
Since $v\notin W_{n,\phi}$, there is an $n$-path from $v$ into some state $s$ such that $\phi\notin s$. 
But any $n$-path from $v$ is an $n$-path from $w$ too, and so there is an $n$-path from $w$ which is not into $\phi$. This contradicts our initial assumption, and  concludes the proof that $\vdash\omega\rightarrow E_n\omega$.

Therefore, since we have $\vdash\omega\rightarrow E_n\phi$ and $\vdash\omega\rightarrow E_n\omega$, we obtain $\vdash\omega\rightarrow E_n(\phi\wedge\omega)$ (by $PL$, $Nec(E_n)$, and $K(E_n)$). 
Finally, by inference rule $Ind$, $\vdash\omega\rightarrow C_n\phi$ and since $\vdash \underline{w}\rightarrow \omega$, we obtain $\vdash \underline{w}\rightarrow C_n\phi$, which in turns implies that $C_n\phi\in w$. 
\end{proof}
\begin{lemma}[Truth Lemma]\label{truthlemma}
Let $\psi\in\mathcal{L_{NC}}$, and $M_{cl(\psi)}=(W,A,\mathcal{R},\pi,\mu)$ canonical for $cl(\psi)$. For all $\phi\in cl(\psi)$ : 
$$M_{cl(\psi)},w\models\phi \ \textit{ if and only if }\ \phi\in w.$$
\end{lemma}
\begin{proof}
By induction on the length of $\phi$. Boolean cases are standard, and cases for $E_n\psi$ and $S_n\psi$ are treated similarly as in \cite{HG93}, we only give the detail for the additional case where $\phi:= C_n\psi$. 
\begin{center}
\begin{tabular}{lcr}
  $M,w\models C_n\psi$   & iff  & (by the semantics of $C_n$)\\
    for all $v\in R_{C_n}(w)$, $M,v\models\psi$ & iff  & (by IH)\\
    for all $v\in R_{C_n}(w)$, $\psi\in v$ & iff  & (by def of $n$-paths) \\
    all $n$-paths from $w$ are into $\psi$ & iff  & (by Lemma \ref{lemmaCn}) \\
    $C_n\psi\in w$. & & \\[-4.5ex]
\end{tabular}
\end{center}
\end{proof}
A standard contraposition argument allows to conclude the proof of Theorem \ref{thmCncomplete}.
\end{proof}
We finish this section by comparing briefly the notion of common knowledge studied here with non-rigid common knowledge as defined in~\cite{moses1988programming,RAK}. The models used to define non-rigid common knowledge are essentially the same as the models of epistemic logic with names. The definition of non-rigid common knowledge is, however, based on a different individual modality, written $B_i^n$, which is informally described as "agent $i$ knows/believes that if $i$ is in $n$, then $\phi$". The original motivation for this reading was to study cases where a process knows that if it is correct, then $\phi$ holds. Translated into our semantics this would have the following truth condition.
     $$M, w \vDash B_i^n\phi\ \text{ iff }\ \forall w'. wR_a w' \wedge a \in \mu(w',n) \rightarrow M, w' \vDash \phi. $$
An easy argument shows that $S_n\phi$ implies $B_i^n\phi$ for some agents $i$ that are members of $n$ in the current situation, but not the other way around.~\cite{HG93} show that the corresponding $S'_n$ and $E'_n$ can be axiomatized using the same axioms as the original $S_n$ and $E_n$. On the basis of this we conjecture that the corresponding notion of common knowledge could be axiomatized as the one defined here, but we leave this for future work.
\subsection{Distributed knowledge with names}\label{sec:distributed}
For distributed knowledge we propose a generalization of the $S_n$ modality. Instead of requiring the existence of one \emph{agent} with name $n$ that knows $\phi$, $D_n \phi$ is true whenever there exists a (non-empty) \emph{sub-group} of $n$ such that pooling the information of the agents in that sub-group would entail $\phi$. 
\begin{definition}[Satisfaction, $D_n$]
    $$M, w \vDash D_n \phi\ \text{ iff }\   \exists X \subseteq \mu(w,n) , X \neq \emptyset \wedge \forall v \in \bigcap_{i \in X}R_i(w), M, v \models \phi$$
\end{definition}
\noindent This version of distributed knowledge, unlike $S_n$, turns out to be closed under conjunction. The intuitive reason for this is that sub-groups can be merged. If a sub-group $g$ of $n$ distributely knows that $p$ and another sub-group $g'$ of $n$ distributely knows that $q$, when the union $g \cup g'$ of these sub-groups effectively pools these two pieces of information, leading to distributed knowledge of $p \wedge q$. The logic of $D_n$ is not completely normal, though, as like for $S_n$ it does not validate necessitation. 
\begin{definition}[System $AX_\mathcal{ND}$] The logic $AX_\mathcal{ND}$ is the logic $AX_\mathcal{N}$, extended with the following axioms:
\begin{equation*}
    \begin{array}{ccc}
   D_n \phi \wedge D_n (\phi \rightarrow \psi) \rightarrow D_n \psi  &  & K_D\\
     
 S_n \phi \rightarrow D_n \phi & & Inclusion \\
 D_n \phi \rightarrow \phi & & T_D \\
D_n \phi \wedge E_n (\phi \rightarrow \psi) \rightarrow D_n \psi & & Interaction
\end{array}
\end{equation*}
\end{definition}
\begin{theorem}\label{completeDKnowledge}
The logic $AX_\mathcal{ND}$ is sound and complete w.r.t. the class of frames over $N$ and the language containing the modality $D_n$. 
\end{theorem}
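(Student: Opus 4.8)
The plan is to establish soundness by a routine check on reflexive frames; the only non-obvious cases are $T_D$, which needs reflexivity, and \emph{Interaction}, whose validity relies on the witnessing subgroup $X$ being a \emph{subset} of $\mu(w,n)$, so that the $E_n$-premise constrains exactly the agents in $X$. For completeness I would build a canonical model in the style of Section~\ref{sec:common} and~\cite{HG93}. Since $D_n$ is \emph{not} a fixpoint operator, the non-compactness problem that forced the finite-closure construction for $C_n$ does not arise, and the ordinary canonical model over all maximal consistent sets (MCSs) suffices: take $W$ the set of MCSs, agents $a_{\phi,w,n}=\{v\mid D_{\phi,w,n}\subseteq v\}$ for each $S_n\phi\in w$ with $D_{\phi,w,n}=\{\phi\}\cup\{\psi\mid E_n\psi\in w\}$, put $wR_av$ iff $w,v\in a$, and $\mu(w,n)=\{a_{\phi,w,n}\mid S_n\phi\in w\}$. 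Before turning to the new modality I would record a few derived facts: $D_n$ is monotone and closed under conjunction (both follow from $K_D$, \emph{Interaction} and $Nec(E_n)$); $\vdash D_n\phi\to S_n\top$ (chain: $D_n\phi\to D_n\top$ by monotonicity; $D_n\top\to\neg E_n\bot$ from $\vdash\neg D_n\bot$ ($T_D$) together with \emph{Interaction} and $\bot\leftrightarrow(\top\to\bot)$; and $\neg E_n\bot\to S_n\top$ is $Int_2$); and $\vdash S_n\top\wedge E_n\chi\to\chi$ (from $Int_1$ and $T(S_n)$), which guarantees that each $a_{\phi,w,n}$ with $S_n\phi\in w$ actually contains $w$, i.e.\ that the relations are reflexive wherever their agent is named.

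The key observation for the $D_n$-clause is that the semantics is \emph{monotone} in the subgroup: since $X\subseteq X'$ implies $\bigcap_{i\in X}R_i(w)\supseteq\bigcap_{i\in X'}R_i(w)$, we have $w\vDash D_n\phi$ iff $\mu(w,n)\neq\emptyset$ and $\bigcap_{a\in\mu(w,n)}R_a(w)\subseteq||\phi||$, the whole group giving the tightest intersection. In the canonical model this whole-group intersection is exactly $\{v\mid\Delta_{w,n}\subseteq v\}$, where $\Delta_{w,n}=\{\phi\mid S_n\phi\in w\}\cup\{\psi\mid E_n\psi\in w\}$. Using the induction hypothesis ($||\phi||=\{v\mid\phi\in v\}$) together with the Lindenbaum lemma, the $D_n$-case of the Truth Lemma therefore reduces to the purely syntactic bridge
$$D_n\phi\in w \iff \big(\,S_n\top\in w \ \text{ and }\ \Delta_{w,n}\vdash\phi\,\big).$$

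The right-to-left direction and the necessity of $S_n\top$ are the routine parts. If $\Delta_{w,n}\vdash\phi$ then by compactness $\vdash(\bigwedge_i\phi_i\wedge\bigwedge_k\psi_k)\to\phi$ for finitely many $S_n\phi_i,E_n\psi_k\in w$; \emph{Inclusion} turns each $S_n\phi_i$ into $D_n\phi_i$, conjunctivity gives $D_n(\bigwedge_i\phi_i)$, and one application of \emph{Interaction}, feeding in $E_n(\bigwedge_i\phi_i\to\phi)$ obtained from the $E_n\psi_k$ via $Nec(E_n)$ and $K(E_n)$, yields $D_n\phi\in w$; the $S_n\top$-witness covers the premise-free case. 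The necessity of $S_n\top$ is just the derived theorem $\vdash D_n\phi\to S_n\top$ recorded above, which also disposes of the empty-name case, since $\mu(w,n)=\emptyset$ forces $\neg D_n\phi\in w$ for every $\phi$.

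The crux, and the step I expect to be the main obstacle, is the remaining implication $D_n\phi\in w\Rightarrow\Delta_{w,n}\vdash\phi$: one must rule out that $w$ contains $D_n\phi$ for some $\phi\notin\mathit{Cn}(\Delta_{w,n})$. The difficulty is that a priori $D_n\phi$ could sit in $w$ as a ``primitive'' member, not visibly produced from the $S_n$- and $E_n$-facts of $w$. The strategy is to show $\{\theta\mid D_n\theta\in w\}=\mathit{Cn}(\Delta_{w,n})$: one inclusion is the right-to-left direction above, and for the other I would argue that $D_n$ can only be \emph{introduced} from $D_n$-free material by \emph{Inclusion} (from an $S_n$-fact) and can only be \emph{propagated} by $K_D$ and \emph{Interaction}, each of which mirrors a derivation step from $\Delta_{w,n}$ (the formula $\chi\to\theta$ of an \emph{Interaction} step lies in $\Delta_{w,n}$ precisely because $E_n(\chi\to\theta)\in w$). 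Equivalently, and closer in spirit to the $C_n$-proof, from $\Delta_{w,n}\not\vdash\phi$ one extends $\Delta_{w,n}\cup\{\neg\phi\}$ to an MCS $v$, which then lies in the whole-group core of $w$ while refuting $\phi$; a consistency argument of the kind used for ``$\vdash\omega\to E_n\omega$'' in Lemma~\ref{lemmaCn} forces $\neg D_n\phi$ into $w$. With the bridge in hand, the Truth Lemma and a standard contraposition finish the proof exactly as for Theorem~\ref{thmCncomplete}.
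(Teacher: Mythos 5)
There is a genuine gap, and it sits exactly where you flagged it: the left-to-right direction of your ``syntactic bridge'' $D_n\phi\in w \Rightarrow \Delta_{w,n}\vdash\phi$ is simply false, so no consistency argument in the style of the $\vdash\omega\to E_n\omega$ step can establish it. Here is a concrete refutation. Take $W=\{w,u,v\}$, $A=\{a,b\}$, $\mu(x,n)=\{a,b\}$ for all $x$, $R_a$ the reflexive closure of $\{(w,u)\}$, $R_b$ the reflexive closure of $\{(w,v)\}$, and $\pi(p)=\{w\}$ with all other atoms empty. The map fixing $w$ and swapping $u$ with $v$ (together with swapping $a$ and $b$) is an automorphism, so $Th(u)=Th(v)$; hence at $w$ a formula $\theta$ satisfies $S_n\theta$ or $E_n\theta$ iff $\|\theta\|=W$. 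Since $R_a(w)\cap R_b(w)=\{w\}\subseteq\|p\|$ we have $w\vDash D_np$, so $D_np\in Th(w)$; but $\Delta_{Th(w),n}=\{\theta:\|\theta\|=W\}$ is contained in the consistent set $Th(u)$, which contains $\neg p$, so $\Delta_{Th(w),n}\not\vdash p$. By soundness $Th(w)$ is a maximal consistent set, so your bridge fails, and with it the Truth Lemma for the plain canonical model: in that model the whole-group intersection at $Th(w)$ contains $Th(u)$, which refutes $p$. The underlying reason is that the axioms give only a \emph{lower} bound on $D_n$ (via \emph{Inclusion}, $K_D$ and \emph{Interaction}); nothing forces every $D_n$-fact of an MCS to be generated from its $S_n$- and $E_n$-facts, because the intersection of the agents' relations can carry information that no single agent, and no $E_n$-statement, records.

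This is precisely why the paper does not attempt a direct Truth Lemma on the canonical model but instead uses the copy-and-splitting technique standard for intersection modalities: each ``problematic'' MCS $\Delta$ lying in $\bigcap_{a\in X}R_a(\Gamma)$ while omitting $\phi$ (for some $D_n\phi\in\Gamma$) is replaced by agent-indexed copies $\Delta_a$, each reachable from $\Gamma$ via $R_a$ only, so that it drops out of every intersection over two or more agents; the Existence Lemma (your right-to-left direction, which is correct, as are your soundness check, the derived theorems, and the monotonicity observation that the whole group is always the optimal witness) then guarantees that what remains in the intersection verifies $\phi$, and a frame morphism back to the canonical model preserves the $D_n$-free fragment. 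So your proposal is right up to and including the reduction, but the reduction lands on a false statement, and the repair requires modifying the model rather than sharpening the syntactic argument.
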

\begin{proof}
The proof proceeds by adapting the usual copy-and-splitting technique for completeness with intersection modalities (c.f.~\cite{wang2020simpler} for a recent overview). 
\end{proof}
Distributed knowledge with names is, like its standard counterpart, not invariant under the notion of bisimulation in Definition~\ref{def:bisimulations}. It can however be extended in the standard way to also cover intersections of modalities.
\section{Conclusion} \label{sec:conclusion}
Adding names to standard epistemic logic lifts one of the most fundamental idealizations, alongside logical omniscience, of classical logic for knowledge and belief: the facts that, on the one hand, the agents' names are common knowledge and, on the other, that groups are defined extensionally. This idealization is perhaps even more problematic than logical omniscience to the extent that it appears unrealistic even if we give a normative interpretation to multi-agents epistemic logic. It seems perfectly possible for perfect reasoners to still be uncertain of the identity of the others, or of who belong to which group.

The results reported here update Grove and Halpern's original formulation, so to speak, by connecting it to known model-theoretic results and to neighborhood semantics, as well as by extending it with the standard notions of common and distributed knowledge. This lays the ground for studying, using modern logical tools, knowledge and beliefs in a much broader class of social situations than what was possible in standard epistemic logic, without having to move to highly complex, explicit first-order extensions of epistemic logic.

The results here suggest many possible avenues for future work. We have mentioned along the way a number of open questions related to model and proof theory, both for the original $E_n$ and $S_n$ modalities as well as for the common and distributed knowledge extensions. The joint axiomatization of the latter two notions is also open, as well as the extendability of the results here to more restricted classes of frames, in particular in those cases where the agents know their own names. Another natural next step is to look at dynamic extensions of this language~\cite{kooi2007dynamic,liberman2020dynamic}, and especially conditions under which agents can learn who are the members of certain groups. One should also study the possibility of postulating certain patterns of interactions between names. In the current framework, there is no such interaction, and one cannot even express the fact that a certain group $n$ is a sub-group of another group $m$. A first question to address here would be what extensions of the language allow for the expressions of such relations while keeping the logic decidable.

\bibliography{Bibwithdois.bib}
\bibliographystyle{eptcs}
\end{document}